\definecolor{green}{rgb}{0.25,0.7,0.5} 
\newcommand{\fllist}{\begin{list}{$\bullet$}{\setlength{\parsep}{0mm}%
        \setlength{\topsep}{2mm}\setlength{\leftmargin}{6ex}%
        \setlength{\labelsep}{.5em}}\setlength{\itemsep}{1ex}}
\newcommand{\exend}{\end{list}\vspace*{1ex}}
\newcounter{mycount}
\newcounter{subcount}
\renewcommand{\arraystretch}{0.5}
\newcommand{\p}{\partial}
\newcommand{\sign}{{\rm sign}}
\theoremstyle{plain}  
\newtheorem{thm}{Theorem}[section]
\newtheorem{lem}[thm]{Lemma}
\theoremstyle{definition}  
\newtheorem{definition}[thm]{Definition}
\theoremstyle{remark}  
\newtheorem{rem}[thm]{Remark}
\title{A chemostat model with variable dilution rate due to\\ biofilm growth}
\author{Xiaochen Duan\thanks{Corresponding author, Department of Mathematics, University of Florida, Gainesville, FL 32611.  \href{mailto:duanxiaochen@ufl.edu}{duanxiaochen@ufl.edu}} ~
and ~ Sergei S. Pilyugin\thanks{Department of Mathematics, University of Florida, Gainesville, FL 32611.  \href{mailto:pilyugin@ufl.edu}{pilyugin@ufl.edu}}
}
\date{October 31, 2024}
\begin{document}
\maketitle

\begin{abstract}
In many real life applications, a continuous culture bioreactor may cease to function properly due to bioclogging which is typically caused by the microbial overgrowth. 
This is a problem that has been largely overlooked in the chemostat modeling literature, despite the fact that a number of  models explicitly accounted
for biofilm development inside the bioreactor. In a typical chemostat model, the physical volume of the biofilm is considered negligible when 
compared to the volume of the fluid. In  this paper, we investigate the theoretical consequences of removing such assumption. Specifically, we formulate a novel
mathematical model of a chemostat where the increase of the biofilm volume occurs at the expense of the fluid volume of the bioreactor, and as a result the corresponding dilution rate increases reciprocally. We show that our model is well-posed and describes the bioreactor that can operate in three distinct types of dynamic regimes: the washout equilibrium, the coexistence equilibrium, or a transient towards the clogged state which is reached in finite time. We analyze the multiplicity and the stability of the corresponding equilibria.  In particular, we delineate the  parameter combinations for which the chemostat never clogs up and those for which it clogs up in finite time. We also derive criteria for microbial persistence and
extinction. Finally, we present a numerical evidence that a multistable coexistence in the chemostat with variable dilution rate is feasible.
\end{abstract}

\section{Introduction}
In the early 20th century, the need for more efficient methods for large-scale chemical and biological reactions led to the design of the Continuous Stirred Tank Reactor (CSTR) - an apparatus where a continuous flow of reactants and products is maintained, where the contents of the reactor vessel are continuously stirred  to ensure homogeneous concentrations
of all reactants.\cite{Foutch2003} One of the main features of the classical CSTR is the tendency of the chemical reactants to achieve constant concentrations, hence the CSTR has been traditionally dubbed a chemostat. A formal description of a chemostat was published in the early 1950s by Novick and Szilard. \cite{Novick1950} Their work facilitated experiments that required controlled and sustained growth conditions, such as a constant supply of nutrients, concentration levels, temperature, pH, and oxygen levels. Over the decades, the chemostats have undergone refinement and adaptation to suit diverse biological and industrial applications ranging from macro- to microfluidic spatial scales.

One of the very early success stories in mathematical biology was the development of a mathematical theory of microbial competition for a single nutrient in the chemostat. This theory was based on the resource uptake model pioneered by J. Monod \cite{Monod1949} and others and was further developed by Hsu, Hubbell, and Waltman \cite{Hsu1977}.

In 1999, Pilyugin and Waltman proposed a simple chemostat model of the wall-attachment on micro-organisms. \cite{Pilyugin1999} In their model, $u$ represents the density of species in the channel, referred to as interior species, and $w$ represents the density of adherent species on the wall, referred to as wall-attached species. The equations for the simple chemostat model with wall growth were expressed as the following system
\begin{equation}
\label{1999.model}
     \left\{
     \begin{aligned}
        \dot{S} & = D(S^0-S) - f(S) u - g(S) w, \\
        \dot{u} & = (f(S) - D) u -\alpha u + \beta w, \\
        \dot{w} & = g(S)w -  \beta w + \alpha u,  
    \end{aligned} 
    \right.
\end{equation}
where $\alpha$ is the rate of adhesion, $\beta$ is the sloughing rate of wall-attached species, and $D$ is the dilution rate.
The functions $f(S)$ and $g(S)$ are the growth rates for the species $u$ and $w$ respectively. They must satisfy the
following properties:
\begin{equation*}
    \begin{aligned}
         & f'(S) > 0,~~ f \in C^1,~~ f(0) = 0 \\
         & g'(S) > 0,~~ g \in C^1,~~ g(0) = 0.
    \end{aligned}
\end{equation*}
A traditional choice for $f(S)$ and $g(S)$ is a Holling type II (also known as Monod or Michaelis-Menten) response function:
\begin{equation*}
    f(S) = \frac{m_f S}{a_f + S}, ~~ g(S) = \frac{m_g S}{a_g + S},
\end{equation*}
where, $m_f$ and $m_g$ are the maximal growth rate (units are 1/t), and $a_f$ and $a_g$ are the half-saturation constants with units of concentration \cite{MR1315301}. 

In the traditional wall-growth model, the bacteria that are attached to the wall are considered immobile, so they do not wash out from the chemostat as the contents of
the reactor are constantly diluted. An alternative way to model the differential removal of bacteria is to consider the flocculation mechanism: planktonic bacteria
may aggregate to form flocs that are less mobile due to their increased size, and therefore the flocs may exhibit a slower removal rate than the planktonic bacteria.
The flocculation has been shown to successfully explain the bacterial coexistence in the chemostat \cite{Haegeman2008,FelikhSalem2013}.

While Pilyugin and Waltman's chemostat model is useful in understanding the interaction between the wall of the chemostat and the experimental species, the theory does not show how the volume of the wall species affects the dilution rate. Actually, one effect of wall species is bioclogging. \cite{Boswell2004} Bioclogging causes of the decrease in hydraulic conductivity directly. \cite{Abdel2010} In detail, bioclogging by microbial cell bodies and their synthesized byproducts such as extracellular polymeric substance \cite{Jiang&Matsumoto1995}, which forms biofilm \cite{Taylor1990} or microcolony aggregation. \cite{Seki2001} It is shown in the first and second graph of figure \ref{modelling}.

In our model, we consider the effect of bioclogging with the variation of the volume. The dilution rate is the rate at which fresh medium is added to the reactor relative to the volume of the reactor. It is a critical parameter that controls the growth rate of microorganisms, thus it is essential to understand how wall species impact the dilution rate to maintain a controlled environment within the chemostat.

For mathematical tractability we neglect the complications associated with biofilm's geometry and simply assume that the sum of the volumes of the biofilm ($V_w$) and the fluidic chamber ($V_f$) is conserved ($V_w + V_f = V_0$), so that biofilm growth proportionally  reduces the latter volume. We introduce the limiting quantity $w_0$ of the biofilm that is sufficient to fill the entire reactor chamber. In normal (unclogged) operational regime, we have $0 \leq w <w_0$, and when $w = w_0$, we consider the chemostat completely clogged. 
 

\

\begin{figure}[h]
    \centering
    \includegraphics[width = 1\textwidth, height = 0.35\textwidth]{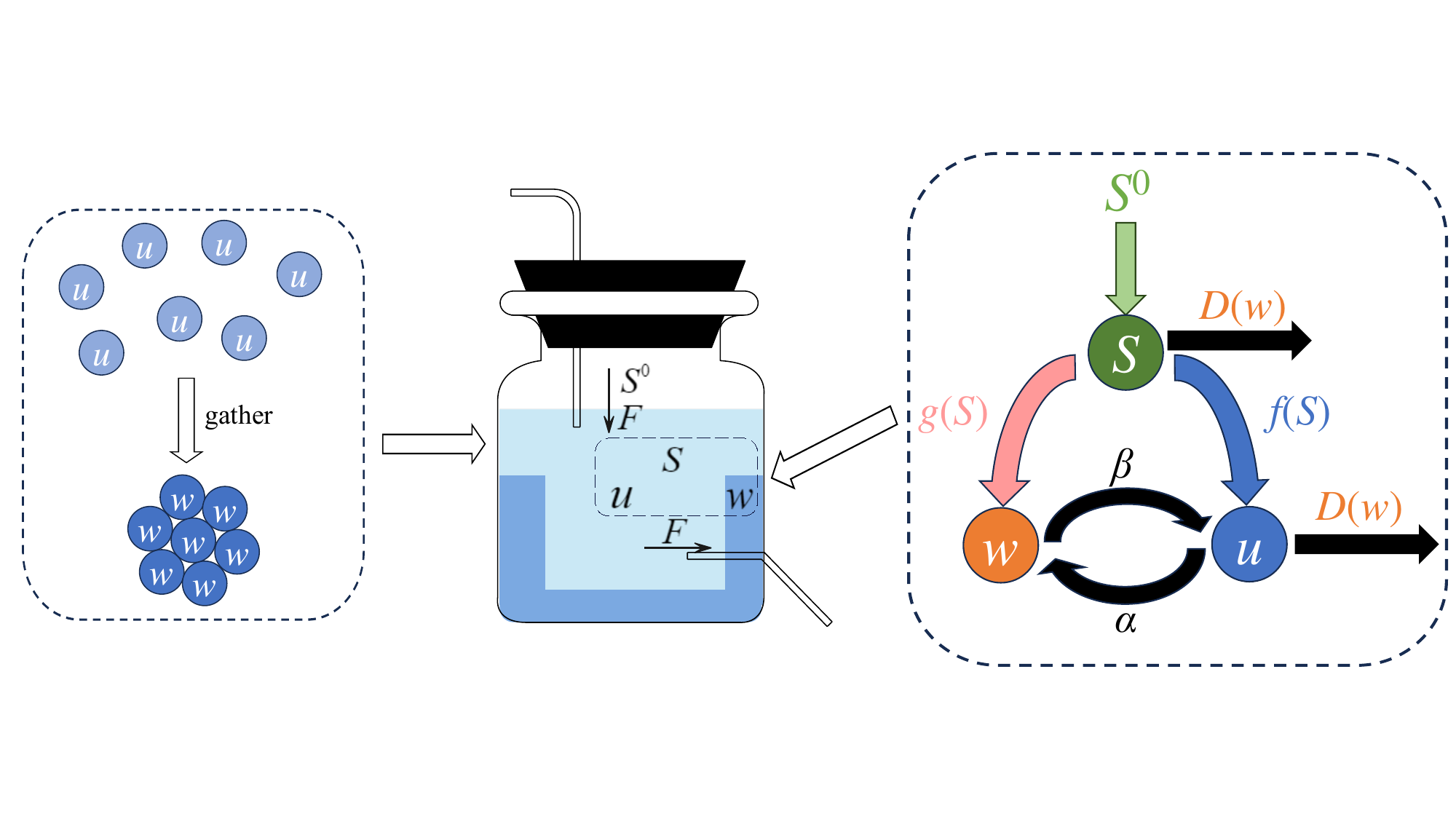}
    \caption{The aggregations of the species $u$ form the larger and heavier biopolymer which is not influenced by the dilution. Hence, these polymers will stay on the wall as the adherent, which is corresponding to the shaded part of the second figure.
    The third figure describes the relationships between all variables in the chemostat model and the dilution rate is influenced by the growth of wall-attached species $w$.}
    \label{modelling}
\end{figure}


Where, the volume of the shaded part(c.f. the second figure of the figure \ref{modelling}) is $V_w$, and the light part is $V_f$. 
The assumption of proportional  volume decrease can be expressed as $\displaystyle\frac{w}{w_0} = \frac{V_w}{V}$, hence we have that
\begin{equation}
    \frac{V_f}{V_0} + \frac{w}{w_0} = 1, \quad V_f = V_0\left(1-\frac{w}{w_0}\right).
\end{equation}

In the chemostat literature, the dilution rate $D$ is defined as the ratio of $F$, the volumetric flow rate (units are length$^3/t$), to the volume of the fluidic chamber $V_f$ (i.e. $D = \dfrac{F}{V_f}$). Thus, using the relationship of volume in correspondence to wall species $w$, the dilution rate can be rewritten as
\begin{equation}\label{dilution}
    D(w) =  \frac{F}{V_f} = \frac{F}{V_0} \frac{1}{1-\displaystyle\frac{w}{w_0}} = \frac{D_0 w_0}{w_0 - w},
\end{equation}
where $D_0 = \dfrac{F}{V_0}$ is the dilution rate in the absence of the biofilm. Assuming that the flow rate $F>0$ is fixed, the dilution rate $D(w)$ is an increasing function of $w$ such that
$D(0)=D_0>0$ and $\lim_{w\uparrow w_0} D(w)=+\infty.$

Substituting (\ref{dilution}) into (\ref{1999.model}), we obtain a chemostat model with variable dilution rate due to biofilm growth:
\begin{equation}
\label{dilution.ode}
     \left\{
     \begin{aligned}
        \dot{S} & = D(w)(S^0-S) - f(S) u - g(S) w,\ S(0) \geq 0, \\
        \dot{u} & = (f(S) - D(w)) u -\alpha u + \beta w,\ u(0) \geq 0, \\
        \dot{w} & = g(S)w -  \beta w + \alpha u, \ w_0>w(0) \geq 0, 
    \end{aligned} 
    \right.
\end{equation}
where $D(w)$ is the dilution rate function defined by (\ref{dilution}) and the remaining parameters are simply inherited from the model (\ref{1999.model}).  
The following table contains the model parameters, all parameters are positive.

\begin{center}
\renewcommand{\arraystretch}{1.2}
    \begin{tabular}{cl}
        \hline
        Symbol & Description  \\
        \hline
        $t$ & Time \\
        $S$ & Concentration of nutrient in the reactor \\
        
        $u$ & Biomass concentration of species in the fluid \\        
        
        $w$ & Biomass concentration of wall-attached species\\ 
        
        $w_0$ & The maximum capacity of the wall-attached species \\
        
        $f(S)$ & Specific growth rate of species in the fluid \\
        
        $g(S)$ & Specific growth rate of wall-attached species \\

        $S^0$ & Input concentration of the nutrient  \\

        $V_0$ & Total volume of the reactor  \\

        $F$ & Volumetric flow rate of the chemostat  \\
        
        $D_0$ & Minimal dilution rate of the chemostat ($w=0$)  \\

        $D(w)$ & Variable dilution rate of the chemostat ($0 \leq w<w_0$)  \\

        $m_f$ & Maximum growth rate of species in the fluid  \\

        $m_g$ & Maximum growth rate of wall-attached species  \\

        $a_f$ & Half-saturation constant of growth rate of species in the fluid \\

        $a_g$ & Half-saturation constant of growth rate of wall-attached species \\

        $\alpha$ & Adhesion rate of species in the fluid  \\

        $\beta$ & Sloughing rate of wall-attached species  \\
        \hline
\end{tabular}
\end{center}

The rest of this paper is organized as follows. In Section \ref{sec2}, we formulate the rescaled model and prove its well-posedness. In Section \ref{sec3}, we discuss
the boundary equilibria of the rescaled system and present sufficient conditions for the species washout.  In Section \ref{sec5}, we discuss the stability properties of
the clogged state. Section \ref{sec4} is dedicated to the uniform persistence.  Section \ref{sec6} contains the discussion of the existence, stability and multiplicity of coexistence equilibria. Section \ref{sec7} is the discussion, and it concludes this paper.


\section{The rescaled model and its well-posedness}\label{sec2}

For mathematical convenience, we non-dimensionalize the parameters and dependent variables and also rescale the time by $1/D_0$. The system of equations (\ref{dilution.ode}) are simplified and non-dimensional quantities are indicated below with tildes. The following table describes the model parameters.

\begin{center}
\renewcommand{\arraystretch}{1.2}
    \begin{tabular}{cc}
        \hline
        Symbol & Dimensionless quantity \\
        \hline
        $\widetilde{S}$ & $S / S^0$ \\
        $\widetilde{u}$ & $u / S^0$ \\        
	 $\widetilde{w}$ & $w / S^0$ \\ 
	$\widetilde{w_0}$ & $w_0 / S^0$ \\
	$\widetilde{f}(\widetilde{S})$ & $\frac{1}{D_0}f(S^0\widetilde{S})$ \\
	 $\widetilde{g}(\widetilde{S})$ & $\frac{1}{D_0}g(S^0\widetilde{S})$ \\
	$\widetilde{D}(\widetilde{w})$ & $\frac{1}{D_0}D(S^0\widetilde{w})$ \\
	$\widetilde{\alpha}$ & $\alpha/D_0$ \\
	$\widetilde{\beta}$ & $\beta/D_0$ \\
        \hline
\end{tabular}
\end{center}
We omit the tildes and return to the original notation:
\begin{equation}
\label{dilution.model}
     \left\{
     \begin{aligned}
        \dot{S} & = D(w)(1-S) - f(S) u - g(S) w, \\
        \dot{u} & = (f(S) - D(w)) u -\alpha u + \beta w, \\
        \dot{w} & = g(S)w -  \beta w + \alpha u.  
    \end{aligned} 
    \right.
\end{equation}
Furthermore, we introduce a new time variable $\tau$ such that 
\begin{equation}
    \frac{d}{d\tau} = \left(1-\frac{w(\tau)}{w_0}\right)\frac{d}{dt}.
\end{equation} 
As a result, we obtain a new system (\ref{rescaled.model}) which we will refer to as the \textbf{rescaled system}: 
\begin{equation}
\label{rescaled.model}
     \left\{
\begin{aligned}
    \dot{S} & = 1 - S  - \left(1 - \frac{w}{w_0}\right)[u f(S) +w g(S)], \ S(0) \geq 0,\\
    \dot{u} & = -  u+ \left(1 - \frac{w}{w_{0}}\right) \left[(f(S) - \alpha )u+\beta w \right], \ u(0) \geq 0,\\
    \dot{w} & = \left(1 - \frac{w}{w_{0}}\right)[\alpha u+ (g(S) - \beta)w], \ w_0 \geq w(0) \geq 0.
\end{aligned}
    \right.
\end{equation}

The biologically relevant domain for (\ref{rescaled.model}) is  
\begin{equation*}
    \displaystyle \Omega = \left\{(S,u,w) \in \mathbb{R}^3_+|0 \leq S \leq 1, 0 \leq u, 0 \leq w \leq w_0 \right\}.
\end{equation*}

\begin{thm}
\label{well-posedness}
    The rescaled model (\ref{rescaled.model}) is well posed. 
\end{thm}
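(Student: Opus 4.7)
The plan is to establish well-posedness in the standard three-step fashion: local existence and uniqueness from a smoothness argument, positive invariance of $\Omega$ from a boundary-flow check, and global existence from compactness of $\Omega$.

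First, I would observe that the vector field on the right-hand side of \eqref{rescaled.model} is a polynomial in $(u, w, 1-w/w_0)$ with coefficients that are $C^1$ functions of $S$ (since $f, g \in C^1$). In particular, the right-hand side is locally Lipschitz on an open neighborhood of $\Omega$, so the Picard--Lindel\"of theorem yields a unique local solution through any initial condition in $\Omega$.

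Second, I would verify positive invariance of $\Omega$ by applying Nagumo's tangency condition on each of its five bounding faces, using crucially that $f(0) = g(0) = 0$. On $\{S = 0\}$ the first equation collapses to $\dot{S} = 1 > 0$, so trajectories enter. On $\{S = 1\}$ we get $\dot{S} = -(1 - w/w_0)[u f(1) + w g(1)] \le 0$ since $u, w \ge 0$ and $w \le w_0$. On $\{u = 0\}$ the second equation gives $\dot{u} = (1 - w/w_0)\beta w \ge 0$, and on $\{w = 0\}$ the third gives $\dot{w} = \alpha u \ge 0$. Finally, on $\{w = w_0\}$ the factor $(1 - w/w_0)$ forces $\dot{w} = 0$, so this face is invariant as well (and in fact the dynamics there reduce to $\dot{S} = 1 - S$, $\dot{u} = -u$, $\dot{w} = 0$, which clearly preserve $\Omega$). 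Thus the vector field is either inward-pointing or tangent on every face, which yields forward invariance of $\Omega$.

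Third, since $\Omega$ is compact and forward invariant, any local solution starting in $\Omega$ remains bounded and therefore cannot blow up in finite time; the usual continuation argument extends it uniquely to all $\tau \ge 0$. Combined with continuous dependence on initial data (also a consequence of the Lipschitz property), this completes well-posedness. I do not expect any serious obstacle here: the only mildly delicate point is noticing that the vanishing of $f$ and $g$ at $0$ is what makes the $S = 0$ and $u = 0$ faces behave well, and that the $(1 - w/w_0)$ prefactor in $\dot{w}$ is precisely what renders the clogged face $\{w = w_0\}$ invariant rather than something that the dynamics could cross.
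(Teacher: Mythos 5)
Your first two steps track the paper's proof almost exactly: the paper likewise uses $C^1$-smoothness of the right-hand side to get local existence and uniqueness, and then verifies the same boundary inequalities ($\dot S\vert_{S=0}=1>0$, $\dot u\vert_{u=0}=(1-w/w_0)\beta w\ge 0$, $\dot w\vert_{w=0}=\alpha u\ge 0$, $\dot w\vert_{w=w_0}=0$), citing Theorem A.17 of \cite{Smith2016DynamicalSA} for forward invariance. You additionally check the face $\{S=1\}$, which the paper leaves implicit but which is genuinely needed for invariance of $\Omega$, so that part of your write-up is if anything more complete.

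The gap is in your third step. As defined immediately before the theorem, $\Omega=\{(S,u,w):0\le S\le 1,\ 0\le u,\ 0\le w\le w_0\}$ has \emph{no upper bound on $u$}, so $\Omega$ is closed but not compact, and the argument ``the solution stays in a compact set, hence cannot blow up'' does not apply as written. Forward invariance of $\Omega$ by itself does not rule out $u(\tau)\to\infty$ in finite time, and ruling that out is precisely what global existence requires here. The repair is short but must be made explicit: on $\Omega$ one has $(1-w/w_0)(f(S)-\alpha)u\le f(1)\,u$ and $(1-w/w_0)\beta w=\beta w(w_0-w)/w_0\le \beta w_0/4$, so $\dot u\le f(1)\,u+\beta w_0/4$, i.e.\ the $u$-equation is dominated by an affine function of $u$ with constant coefficients; hence $u$ grows at most exponentially and the standard continuation argument extends the solution to all $\tau\ge 0$. (Equivalently, you could invoke the estimate $\frac{d}{d\tau}(S+u)\le 1-(S+u)+\beta w_0/4$, which is exactly what the paper proves in the lemma immediately following this theorem.) With that bound inserted, your proof is correct and coincides in substance with the paper's.
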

\begin{proof}
    Define $f:\mathbb{R}^3 \rightarrow \mathbb{R}^3$, which maps $(S,u,w)$ to $(f_1(S,u,w),f_2(S,u,w),f_3(S,u,w))$, where 
    \begin{equation*}
        \begin{aligned}
            f_1(S,u,w) & = 1-S - (1 - \frac{w}{w_0})(u f(S) +w g(S)), \\
            f_2(S,u,w) & = - u+ (1 - \frac{w}{w_0})[(f(S) - \alpha )u+\beta w], \\
            f_3(S,u,w) & = (1 - \frac{w}{w_0})[\alpha u+ (g(S) - \beta)w].            
        \end{aligned}
    \end{equation*}
Due to our assumptions, $ f \in C^1$, hence by the corollary 2.2.1 in \cite{ODEbook2013Hsu}, the Initial Value Problem (\ref{rescaled.model}) admits a unique solution. Meanwhile, for all $(S,u,w) \in \Omega$, $f_1(0,u,w) = 1> 0, f_2(S,0,w) = \beta(1 - \dfrac{w}{w_0})w \geq 0$, $f_3(S,u,0) = \alpha u \geq 0$, while  $f_3(S,u,w_0) = 0$. The well-posedness then
follows by the theorem A.17 in \cite{Smith2016DynamicalSA}.
\end{proof}

Theorem \ref{well-posedness} allows to infer the well-posedness of the system (\ref{dilution.model}) since these systems have the same orbits for $0 \leq w <w_0$. The main difference between the two systems is that the original system  (\ref{dilution.model})is undefined on the set $\{w=w_0\}$ which is invariant under the rescaled system (\ref{rescaled.model}).

For the original system, we find the solutions of the original system that remain in $\overline{\Omega}$ are bounded and there is a crucial lemma as follows.
\begin{lem}
    For the rescaled system (\ref{rescaled.model}), $\forall (S(0), u(0), w(0)) \in \mathring{\Omega}$, $\displaystyle\limsup_{t \rightarrow \infty}(S + u) \leq 1 + \displaystyle\frac{\beta w_0}{4}.$  
\end{lem}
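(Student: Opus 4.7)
The plan is to derive a scalar differential inequality for $x(t) := S(t) + u(t)$ and then invoke a standard comparison argument. First, I would add the $S$ and $u$ equations in (\ref{rescaled.model}); the cross terms $\pm (1 - w/w_0)\, u\, f(S)$ cancel, giving
\begin{equation*}
    \dot S + \dot u \;=\; 1 - S - u + \left(1 - \frac{w}{w_0}\right)\bigl[\,\beta w - \alpha u - w\, g(S)\,\bigr].
\end{equation*}

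Next, since $\Omega$ is forward invariant (which follows from Theorem \ref{well-posedness} together with the easy check $f_1(1,u,w) \le 0$), every solution satisfies $0 \le w \le w_0$ for all time, so the factor $1 - w/w_0 \ge 0$. Using also $\alpha, u, g(S) \ge 0$ on $\Omega$, I would discard the two nonpositive terms inside the bracket to obtain
\begin{equation*}
    \dot S + \dot u \;\le\; 1 - (S + u) + \beta\, w\!\left(1 - \frac{w}{w_0}\right).
\end{equation*}
The crucial observation is the elementary fact that the concave quadratic $\phi(w) = w(1 - w/w_0)$ attains its maximum on $[0, w_0]$ at $w = w_0/2$, with maximal value $w_0/4$. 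Hence
\begin{equation*}
    \dot x \;\le\; \left(1 + \frac{\beta w_0}{4}\right) - x.
\end{equation*}

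A direct application of the comparison principle (or the integrating factor $e^t$) yields $x(t) \le C + (x(0) - C)\, e^{-t}$ with $C = 1 + \beta w_0/4$, whence $\limsup_{t\to\infty} x(t) \le C$, which is precisely the stated bound. No step presents a real obstacle; the only insight is to recognize that the constant $\beta w_0/4$ in the statement arises exactly as the maximum of the concave quadratic $\beta w(1 - w/w_0)$, and that the remaining negative contributions $-\alpha u$ and $-w\, g(S)$ can safely be discarded without weakening the desired asymptotic estimate. The hypothesis $(S(0),u(0),w(0)) \in \mathring{\Omega}$ is not actually needed for the argument; any initial condition in $\Omega$ suffices.
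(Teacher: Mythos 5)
Your proposal is correct and follows essentially the same route as the paper's proof: summing the $S$ and $u$ equations so the $u f(S)$ terms cancel, discarding the nonpositive terms $-(1-w/w_0)\alpha u$ and $-(1-w/w_0)w g(S)$, bounding $\beta w(1-w/w_0)$ by its maximum $\beta w_0/4$, and concluding by comparison. The extra remarks on forward invariance of $\Omega$ and the applicability to all of $\Omega$ (not just $\mathring{\Omega}$) are accurate but do not change the argument.
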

\begin{proof}
    \begin{equation*}
        \begin{aligned}
            \dot{(S + u)} & =  1- S - u  - (1-\frac{w}{w_0})wg(S) - (1-\frac{w}{w_0})\alpha u + (1-\frac{w}{w_0})\beta w \\ 
            & \leq 1- S - u + (1-\frac{w}{w_0})\beta w \\
            & = 1- S - u + \beta \frac{(w_0 - w)w}{w_0} \leq 1- S - u + \beta \frac{w_0}{4}.
        \end{aligned}
    \end{equation*}
    We conclude that $\displaystyle\limsup_{t \rightarrow \infty} (S + u) \leq 1 + \displaystyle\frac{\beta w_0}{4 }$, hence the solutions of system (\ref{rescaled.model}) are bounded.
\end{proof}

\section{The boundary equilibria}\label{sec3}
The coordinates of an equilibrium of \ref{rescaled.model}) are solutions of the following system:
\begin{equation}
\label{dilution.eq}
\left\{
 \begin{aligned}
    1 - S - \left(1 - \frac{w}{w_0}\right)(u f(S) +w g(S)) = & 0 \\
    - u+ \left(1 - \frac{w}{w_{0}}\right) \left((f(S) - \alpha )u+\beta w \right) = & 0 \\
    \left(1 - \frac{w}{w_{0}}\right)[\alpha u+ (g(S) - \beta)w] = & 0  
\end{aligned}
\right.
\end{equation}
A direct inspection of system (\ref{dilution.eq}) shows that it admits exactly two boundary solutions, namely, $E_0=(1,0,0)$ and $E_w=(1,0,w_{0})$.  The equilibrium $E_0$ is called the washout or extinction equilibrium of the system (\ref{rescaled.model}). and it is also an equilibrium of the original system (\ref{dilution.model}).  In contrast, $E_w$ is a  boundary equilibrium of the rescaled system (\ref{rescaled.model}), which corresponds
 to the clogged state, and it is a point of non-uniqueness for the system (\ref{dilution.model})  since $D(w_0)=+\infty$. We defer further discussion of $E_w$ to Section \ref{sec5}.

\subsection{The washout equlibrium}
In this subsection, we focus on the washout equilibrium $E_0$ and its stability. The Jacobian matrix at $E_0$ has the form:
\begin{equation*}
    J(E_0) = \left[\begin{array}{ccc}
        -1 & -f(1) & -g(1)  \\
        0 & A & \beta \\
        0 & \alpha  & B
    \end{array}\right ],
\end{equation*}
where $A = f(1) - 1 - \alpha$, and $B = g(1) - \beta$. All eigenvalues of $J(E_0)$ are real numbers given by
\begin{equation*}
    \begin{aligned}
        \lambda_1 & = - 1, \\
        \lambda_{2,3} & = \displaystyle\frac{(A + B) \pm \sqrt{(A - B) ^2 + 4 \alpha\beta}}{2}= \displaystyle\frac{(A + B) \pm \sqrt{(A + B) ^2 + 4 (\alpha\beta-AB)}}{2}.
    \end{aligned}
\end{equation*} 
Consequently,  $J(E_0)$ is Hurwitz if and only if $A,B<0$ and $AB>\alpha\beta$. We also observe that $\lambda_3<\lambda_2$, thus regardless of its sign, $\lambda_2$ is the principal eigenvalue of the $2\times 2$ matrix $M=\left[\begin{array}{cc}
        A & \beta  \\
        \alpha & B
    \end{array}\right]$ which is positive off-diagonal, hence there exists a positive left eigenvector $(U,W)$ of $M$ associated with eigenvalue $\lambda_2$.
We summarize this paragraph in the following Theorem.

\begin{thm}
\label{thm_e0_las}
    $E_0$ is locally exponentially stable (LES) if and only if $A,B<0$ and $AB>\alpha\beta$.
\end{thm}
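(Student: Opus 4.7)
The plan is to invoke the principle of linearized stability: since the vector field of the rescaled system is $C^1$, the equilibrium $E_0$ is locally exponentially stable if and only if the Jacobian $J(E_0)$ displayed above is Hurwitz, i.e., all three eigenvalues have strictly negative real parts. Because $J(E_0)$ is block triangular (the first column is $(-1,0,0)^\top$), the spectrum splits as $\{-1\}\cup\operatorname{spec}(M)$, where $M=\begin{pmatrix}A&\beta\\ \alpha&B\end{pmatrix}$. The eigenvalue $-1$ is always negative, so the whole question reduces to when the $2\times 2$ matrix $M$ is Hurwitz.

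For a $2\times 2$ matrix, the Routh--Hurwitz criterion states that $M$ is Hurwitz if and only if $\operatorname{tr}(M)<0$ and $\det(M)>0$, i.e.,
\begin{equation*}
A+B<0\qquad\text{and}\qquad AB-\alpha\beta>0.
\end{equation*}
These two conditions can be read directly from the closed-form eigenvalues $\lambda_{2,3}$ already written above: the discriminant $(A-B)^2+4\alpha\beta$ is positive (so both eigenvalues are real), and the largest one, $\lambda_2$, is negative iff $A+B<0$ and $(A+B)^2>(A-B)^2+4\alpha\beta$, which simplifies to $AB>\alpha\beta$.

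It remains to show the equivalence of $(A+B<0,\ AB>\alpha\beta)$ with the statement $(A<0,\ B<0,\ AB>\alpha\beta)$. The nontrivial direction is the forward one: from $AB>\alpha\beta\ge 0$ it follows that $A$ and $B$ have the same sign, and combined with $A+B<0$ this forces $A<0$ and $B<0$. The reverse implication is immediate, since $A,B<0$ trivially gives $A+B<0$. This yields the stated criterion.

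The proof is essentially routine linear algebra once the block structure of $J(E_0)$ is exploited; the only step requiring any care is the equivalence in the last paragraph, ensuring that the condition $AB>\alpha\beta$ is genuinely sharper than $A+B<0$ alone and upgrades negativity of the sum to negativity of each summand.
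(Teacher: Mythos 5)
Your proof is correct and follows essentially the same route as the paper: both exploit the block-triangular structure of $J(E_0)$ to reduce the question to the $2\times 2$ matrix $M$, and both derive the condition $A,B<0$, $AB>\alpha\beta$ from the explicit eigenvalues $\lambda_{2,3}$. The only difference is that you spell out the elementary equivalence of $(\operatorname{tr} M<0,\ \det M>0)$ with $(A<0,\ B<0,\ AB>\alpha\beta)$, which the paper leaves implicit.
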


We also have the following global result.

\begin{thm}
\label{thm_e0_gas}
    If $E_0$ is LES, then it attracts all solutions of (\ref{rescaled.model}) in $\mathring{\Omega}$.
\end{thm}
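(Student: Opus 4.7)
The plan is to combine LaSalle's invariance principle with an exponential-growth argument that rules out the competing attracting candidate $E_w=(1,0,w_0)$ on the boundary.

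First I would introduce the linear Lyapunov candidate $V(u,w)=Uu+Ww$, where $(U,W)$ is the positive left eigenvector of the matrix $M$ for its (negative) principal eigenvalue $\lambda_2$, both constructed in the paragraph preceding Theorem~\ref{thm_e0_las}. Writing $c=1-w/w_0$ and substituting the left-eigenvector identities $UA+W\alpha=\lambda_2 U$ and $U\beta+WB=\lambda_2 W$ into $U\dot u+W\dot w$ (after adding and subtracting $f(1)$, $g(1)$ inside the brackets), a straightforward calculation should yield
\begin{equation*}
\dot V = \bigl[c(1+\lambda_2)-1\bigr]Uu + c\lambda_2 Ww + cU\bigl(f(S)-f(1)\bigr)u + cW\bigl(g(S)-g(1)\bigr)w.
\end{equation*}
Since $S\le 1$ on $\Omega$ and $f,g$ are increasing, the last two summands are $\le 0$. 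The coefficient $c(1+\lambda_2)-1$ is strictly negative for every $c\in[0,1]$ (one checks the cases $1+\lambda_2\ge 0$ and $1+\lambda_2<0$ separately, using $\lambda_2<0$), and $c\lambda_2\le 0$. Hence $\dot V\le 0$ throughout $\Omega$.

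A term-by-term inspection of the display then identifies $\{\dot V=0\}\cap\Omega$ with the union of the two invariant line segments $L_1=\{(S,0,0):S\in[0,1]\}$ and $L_2=\{(S,0,w_0):S\in[0,1]\}$, on each of which the reduced dynamics is $\dot S=1-S$. LaSalle's invariance principle therefore places $\omega(x_0)$ inside $L_1\cup L_2$; because $\omega$-limit sets of bounded orbits are connected while $L_1$ and $L_2$ are disjoint and closed, either $\omega(x_0)\subseteq L_1$ or $\omega(x_0)\subseteq L_2$.

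The main obstacle is excluding the second alternative, since $V$ is constant on $L_2$ and LaSalle alone cannot decide between the candidates $E_0$ and $E_w$. I plan to rule $L_2$ out via a direct differential inequality for $c$. If $\omega(x_0)\subseteq L_2$, then $u(t)\to 0$, $w(t)\to w_0$, and $c(t)\to 0^+$, while
\begin{equation*}
\frac{\dot c}{c} = -\frac{\alpha u+(g(S)-\beta)w}{w_0}.
\end{equation*}
Since $S\le 1$ forces $g(S)-\beta\le B<0$, the right-hand side eventually exceeds $|B|/2>0$, so integrating produces exponential growth of $c(t)$, contradicting $c\le 1$. Consequently $\omega(x_0)\subseteq L_1$, and because the restricted flow on $L_1$ is $\dot S=1-S$, we conclude $\omega(x_0)=\{E_0\}$.
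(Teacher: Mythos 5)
Your proof is correct, and it follows the same skeleton as the paper's --- the linear Lyapunov function $L=Uu+Ww$ built from the positive left eigenvector of $M$, followed by LaSalle --- but the two key steps are executed differently, and both differences are sound. First, where the paper simply bounds $\dot L \leq \lambda_2(1-w/w_0)L$ and therefore has to work with the zero set $\{L=0\}\cup\{w=w_0\}$, which includes the entire invariant plane $\{w=w_0\}$, your exact decomposition of $\dot V$ isolates the term $[c(1+\lambda_2)-1]Uu$ with a coefficient that stays strictly negative even at $c=0$; this forces $u=0$ on $\{\dot V=0\}$ and collapses the candidate set to the two segments $L_1\cup L_2$ rather than a segment plus a plane. (Your algebra checks out against the eigenvector identities $UA+W\alpha=\lambda_2U$ and $U\beta+WB=\lambda_2W$.) Second, to exclude the clogged candidate, the paper linearizes at $E_w$, observes that $B<0$ makes $E_w$ a saddle whose two-dimensional stable manifold is exactly $\{w=w_0\}$, and concludes no interior orbit can limit onto it; you instead derive the scalar inequality $\dot c/c=-[\alpha u+(g(S)-\beta)w]/w_0\geq |B|/2$ for large $t$ along any orbit approaching $L_2$, forcing $c=1-w/w_0$ to grow exponentially and contradict $c\leq 1$. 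Your route buys a cleaner LaSalle step and avoids invoking the stable manifold theorem; the paper's route is shorter at the exclusion step because it reuses the Jacobian computation at $E_w$ that is needed again in Section~\ref{sec5}. The only points worth making explicit in a final write-up are (i) that $c(t)>0$ for all finite $t$ because $\{w=w_0\}$ is invariant and the initial condition lies in $\mathring{\Omega}$, so the exponential lower bound on $c$ starts from a positive value, and (ii) that the passage from $\omega(x_0)\subseteq L_1$ to $\omega(x_0)=\{E_0\}$ uses the invariance and compactness of the $\omega$-limit set together with the fact that $E_0$ is the only compact invariant subset of $L_1$ under $\dot S=1-S$.
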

\begin{proof} Suppose that $\lambda_2<0$ and let $(U,W)>0$ be the associated left eigenvector of $M$. Consider the function $L(u,w)=Uu+Ww \geq 0$.
Then 
$$ \dot L = U \dot u+W \dot w = (1 - w/w_0) (U,W)
        \left[\begin{array}{cc}
        f(S) - D(w) - \alpha & \beta  \\
        \alpha & g(S) - \beta
    \end{array}\right ](u,w)^T.$$
Since $f(S) \leq f(1)$ and $g(S) \leq g(1)$ and $D(w) \geq 1$ in $\mathring{\Omega}$, it follows that
$$ \dot L \leq (1 - w/w_0) (U,W) M (u,w)^T = \lambda_2 (1-w/w_0) L.$$
Since the solutions of (\ref{rescaled.model}) are bounded, the $\omega$-limit set of our solution is a part of
the largest invariant set in $\{\dot L=0\}$ due to LaSalle's invariance principle. We have that $\dot L=0$ if $L=0$ or if $w=w_0$.
The plane $\{w=w_0\}$ is invariant under (\ref{rescaled.model}) and all solutions in that plane converge to $E_w$. The Jacobian matrix
of (\ref{rescaled.model}) at $E_w$ has the form
\begin{equation*}
    J(E_w) = \left[\begin{array}{ccc}
        -1 & 0 & g(1)  \\
        0 & -1 & -\beta \\
        0 &0  & -B
    \end{array}\right ],
\end{equation*}
where $B<0$ due to our assumptions. Therefore, $E_w$ is a saddle, and its 2-dimensional stable manifold coincides with the plane $\{w=w_0\}$.
Consequently, no solution starting in $\mathring{\Omega}$ can converge to $E_w$, hence the $\omega$-limit set of our solution
must contain points inside the set $\{\dot L=0\}$ but outside the plane $\{w=w_0\}$. Therefore, $\omega$-limit set of our solution must 
contain $E_0$, but since $E_0$ is LES, the $\omega$-limit set of our solution must coincide with $E_0$.
 
We conclude that $E_0$ attracts all solutions of (\ref{rescaled.model}) in $\mathring{\Omega}$.
\end{proof}

\begin{rem}\label{remark}
 If we denote the vector field of (\ref{dilution.model}) by $F$, then the vector field of (\ref{rescaled.model}) is given by $\tilde{F}=(1-w/w_0)F$.
For any point $E\in\Omega \cap \{ w<w_0\}$, we have that $F(E)=0 \Leftrightarrow  \tilde{F}(E)=0,$  thus systems (\ref{dilution.model}) and (\ref{rescaled.model})
have the same set of equilibria in the set $\Omega \cap \{ w<w_0\}.$
Furthermore, if $F(E)=\tilde F(E)=0$, then
$$ \frac{\p \tilde F}{\p x}(E)=(1-w/w_0) \frac{\p F}{\p x}(E),$$
hence the eigenvalues of the Jacobian matrices of both systems have the same sign structure and these matrices possess identical eigenspaces. 

The main difference between the two systems is that the rescaled system (\ref{rescaled.model}) defines a complete flow on the set $\Omega$,
while some solutions of the original system (\ref{dilution.model}) may be defined on bounded time intervals only as we will show in Section \ref{sec5}.
This is the primary reason why we  study the extinction, the persistence and other global properties in terms of the flow generated by the solutions of the rescaled system (\ref{rescaled.model}).
\end{rem}

\subsection{Numerical results for the washout equilibrium}
In this subsection, we present some numerical simulations to illustrate  the theorems \ref{thm_e0_las} and \ref{thm_e0_gas}.  To do so, we choose a set of parameters which satisfy the condition of Theorem \ref{thm_e0_las} and simulate a solution with a randomly chosen initial condition in $\Omega.$


\begin{figure}[htb]
    \centering
    \subfloat{\includegraphics[width=0.48\linewidth]{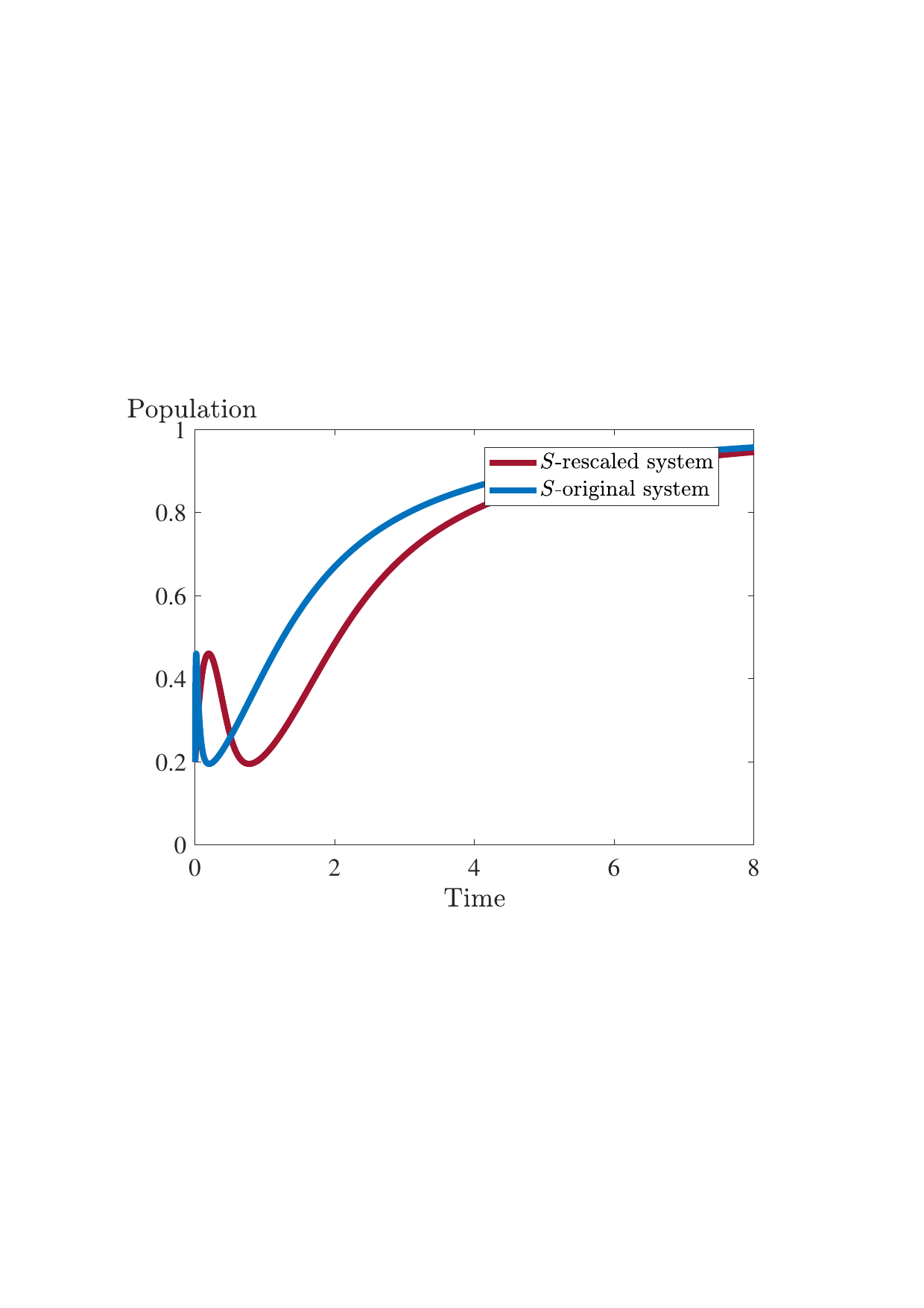}}
    \hfill
    \subfloat{\includegraphics[width=0.48\linewidth]{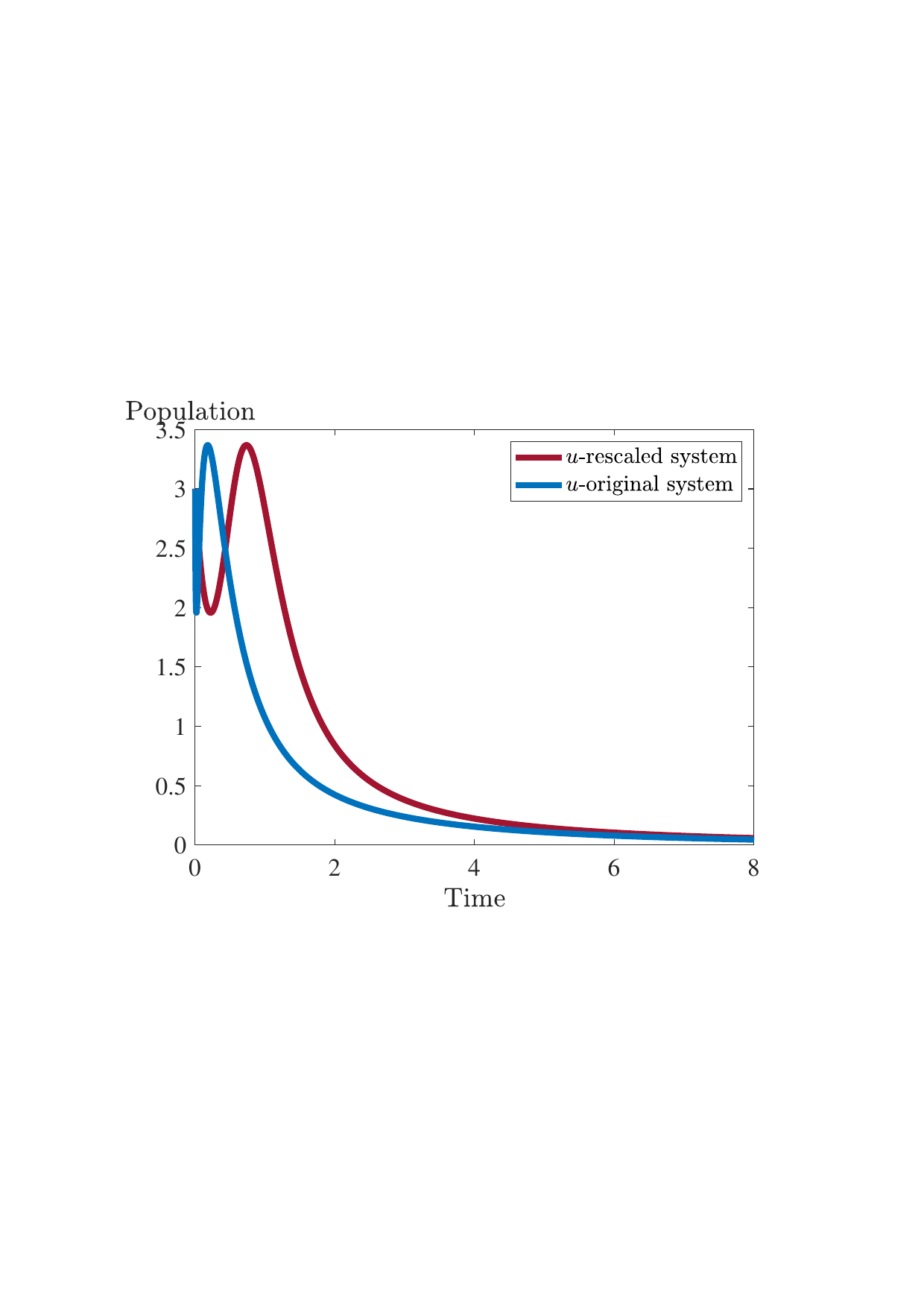}} \\
    
    \subfloat{\includegraphics[width=0.48\linewidth]{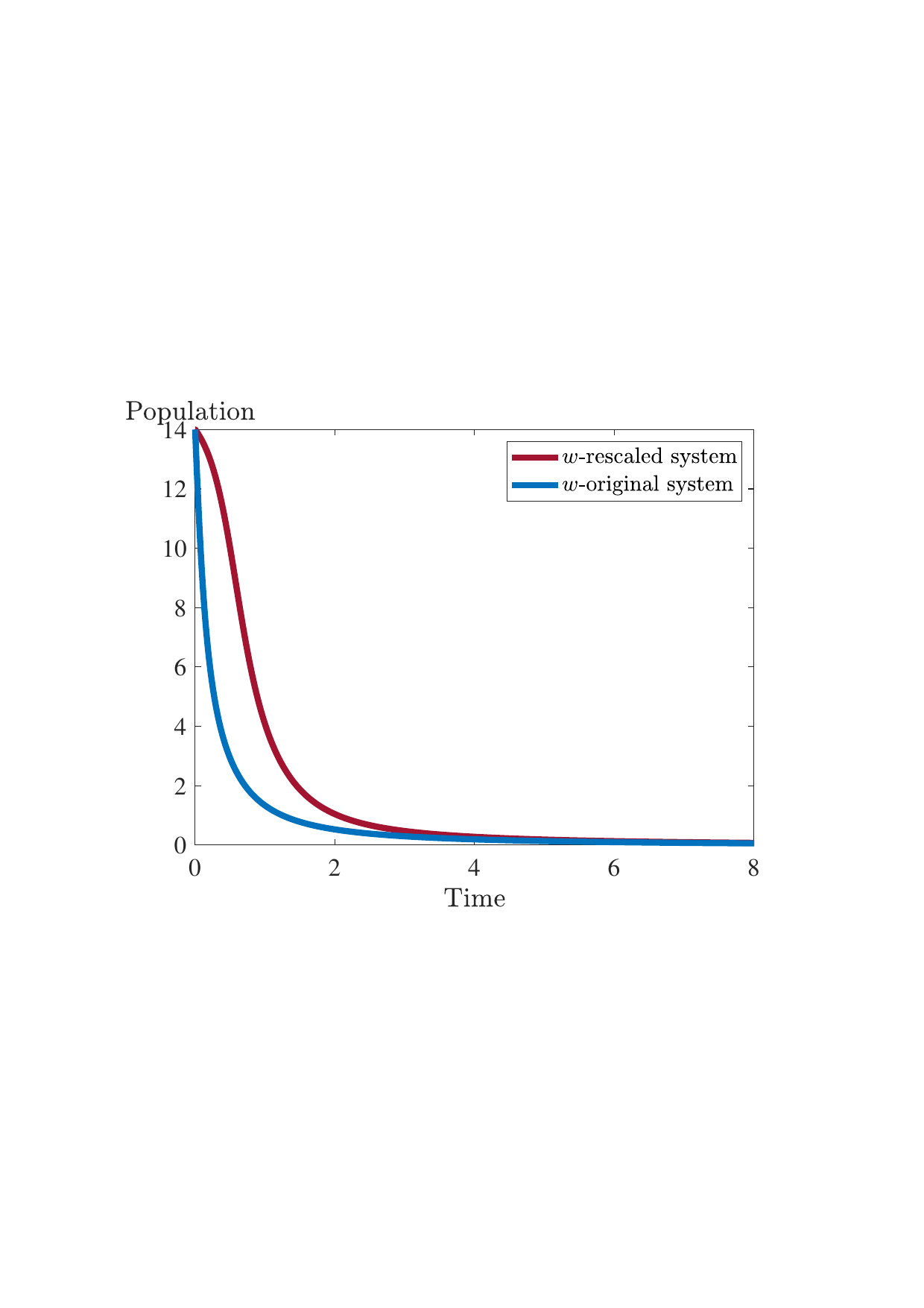}}    
    \caption{This Figure shows the simulated trajectories of the original system (\ref{dilution.model}) (shown in blue) and the rescaled system (\ref{rescaled.model}) (shown in red) corresponding to the same initial condition $(S(0), u(0), w(0)) = (0.2, 3, 14) \in \Omega$. The parameters and the growth functions are given by $\alpha = 1, \beta = 1.4, w_{0} = 14.7$, $f(S) = \dfrac{1.03S}{3.7+S}$, and $g(S) = \dfrac{0.6S}{0.8+S}$, respectively.}
    \label{washout}
\end{figure}

%

\section{The clogged state} \label{sec5}
The clogged state $E_w=(1,0,w_0)$ is an equilibrium of the rescaled system  (\ref{rescaled.model}).
The corresponding Jacobian matrix has the form
\begin{equation*}
J(E_w) = \left[\begin{array}{ccc}
        -1 & 0 & g(1)  \\
        0 &  -1 &  - \beta  \\ 
        0 & 0 & - (g(1) - \beta)
    \end{array}\right ].
\end{equation*}
Hence, $E_w$ is LES if $\beta < g(1)$, and $E_w$ is a saddle point with 2-dimensonal stable manifold $\{w=w_0\}$ if $\beta>g(1)$. 

\begin{lem} \label{Ew1}
    If $E_w$ is LES, then $E_0$ is unstable.
\end{lem}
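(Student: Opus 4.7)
The plan is to prove the contrapositive-style implication by direct inspection of the eigenvalue formulas already derived for $J(E_0)$ and $J(E_w)$. From the analysis in Section \ref{sec5}, the clogged state $E_w$ is LES precisely when $\beta<g(1)$, equivalently $B=g(1)-\beta>0$. I would show that this single inequality already forces one eigenvalue of $J(E_0)$ to be strictly positive, so the stability criterion of Theorem \ref{thm_e0_las} (which requires $A,B<0$ and $AB>\alpha\beta$) necessarily fails.

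The concrete step is to bound the largest eigenvalue
\[
\lambda_2=\frac{(A+B)+\sqrt{(A-B)^2+4\alpha\beta}}{2}
\]
from below by $\max(A,B)$. Since $\alpha,\beta>0$, we have $\sqrt{(A-B)^2+4\alpha\beta}>|A-B|$, hence
\[
\lambda_2>\frac{(A+B)+|A-B|}{2}=\max(A,B)\geq B.
\]
Combining this with $B>0$ (provided by the LES of $E_w$) yields $\lambda_2>0$, which proves that $E_0$ has an eigenvalue with positive real part and therefore is unstable.

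There is no real obstacle here; the only thing to be careful about is the strict inequality $\sqrt{(A-B)^2+4\alpha\beta}>|A-B|$, which requires $\alpha\beta>0$, a standing assumption since all model parameters are positive. Alternatively, one could note that the off-diagonal matrix $M$ from the Jacobian analysis is positive off-diagonal and irreducible (as $\alpha,\beta>0$), so by Perron--Frobenius its principal eigenvalue $\lambda_2$ strictly dominates each diagonal entry, in particular $\lambda_2>B>0$. Either route gives the same conclusion in essentially one line once the eigenvalue formulas from Theorem \ref{thm_e0_las} are invoked.
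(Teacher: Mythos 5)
Your proof is correct and follows essentially the same route as the paper: both arguments observe that LES of $E_w$ forces $B=g(1)-\beta>0$ and then conclude that the principal eigenvalue $\lambda_2$ of $M$ must be positive regardless of the sign of $A$. Your explicit bound $\lambda_2>\max(A,B)\geq B$ via the strict inequality $\sqrt{(A-B)^2+4\alpha\beta}>|A-B|$ simply fills in the one-line detail the paper leaves implicit.
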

\begin{proof} If $E_w$ is LES, then $B=g(1)-\beta >0$. Consequently, regardless of the sign of $A$, the principal eigenvalue $\lambda_2$ of 
of the $2\times 2$ matrix $M=\left[\begin{array}{cc} A & \beta  \\ \alpha & B \end{array}\right]$ is positive, which implies that $E_0$ is unstable
\end{proof}

Lemma \ref{Ew1} implies that  $g(1) > \beta $, then $E_w$ has a nontrivial basin ${\mathcal B} \subset \Omega$ of attraction. In this case,
there exists on open set of initial conditions which result in chemostat clogging. The next result shows that when the clogged state is reached,
it is always reached in finite time.

\begin{lem}
\label{cloggingtime}
    If the clogged state is reached, then it is reached in finite time.
\end{lem}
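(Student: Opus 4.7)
The plan is to exploit the time rescaling relating (\ref{dilution.model}) and (\ref{rescaled.model}) in reverse. Inverting $\tfrac{d}{d\tau} = (1-w/w_0)\tfrac{d}{dt}$ gives $\tfrac{dt}{d\tau} = 1 - w(\tau)/w_0$, so the elapsed original time along a rescaled trajectory is
\begin{equation*}
 t(\tau) - t(0) = \int_0^\tau \left(1 - \frac{w(s)}{w_0}\right) ds.
\end{equation*}
Reaching the clogged state means the rescaled orbit converges to $E_w$ as $\tau \to \infty$, so the lemma reduces to showing this integral is finite in the limit. I would first note that the hypothesis forces $\beta < g(1)$: by the analysis preceding the lemma, if $\beta > g(1)$ then the stable manifold of $E_w$ is the invariant plane $\{w = w_0\}$, which no orbit starting in $\mathring{\Omega}$ can enter, so convergence to $E_w$ can only occur in the LES case.

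The key step is to establish exponential decay of $y(\tau) := 1 - w(\tau)/w_0$. The third equation of (\ref{rescaled.model}) rewrites as
\begin{equation*}
 \dot y = -\frac{y}{w_0}\bigl[\alpha u + (g(S) - \beta)w\bigr].
\end{equation*}
Since the orbit tends to $(1,0,w_0)$, the bracket converges to $(g(1)-\beta)w_0 > 0$; thus for any $\eta \in (0, g(1)-\beta)$ there exists $T$ with $\dot y \leq -(g(1)-\beta-\eta)\,y$ for $\tau \geq T$, and Gr\"onwall yields $y(\tau) \leq y(T)\exp(-(g(1)-\beta-\eta)(\tau-T))$ on $[T,\infty)$. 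This bound is integrable, with total mass at most $y(T)/(g(1)-\beta-\eta)$. Since $\int_0^T y\, ds$ is also finite, $t(\tau)$ has a finite limit, which is exactly what the lemma asserts.

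I do not expect a serious obstacle here. The only subtle point is justifying that the bracket in $\dot y$ is eventually bounded below by a positive constant, but this follows immediately from continuity once we assume the orbit converges to $E_w$. An alternative packaging would be to invoke the stable-manifold/linearization picture near $E_w$ directly, reading off the rate $g(1)-\beta$ from the third eigenvalue of $J(E_w)$, but the $\dot y$ estimate above is more self-contained and avoids any appeal to normal forms.
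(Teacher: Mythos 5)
Your proposal is correct and follows essentially the same route as the paper: both convert the question to the finiteness of $\int_0^\infty (1-w(\tau)/w_0)\,d\tau$ via $dt/d\tau = 1-w/w_0$ and establish exponential decay of $w_0-w(\tau)$ at rate $g(1)-\beta$. Your version is in fact slightly more careful, deriving the decay from a differential inequality and Gr\"onwall rather than asserting the asymptotic $\sim e^{-\nu\tau}$ from the linearization at $E_w$ as the paper does.
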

\begin{proof}
Consider $\nu = g(1) - \beta > 0$, then for solutions in ${\mathcal B}$
converging to $E_w$ in the rescaled system (\ref{rescaled.model}), we will have
$$ \frac{{w_0}-w(\tau)}{w_0} \sim \exp(-\nu \tau), \quad \tau \to \infty,$$
hence in the original system 
$$ \frac{dt}{d \tau}=\frac{{w_0}-w(\tau)}{w_0}=O( \exp(-\nu \tau)), \quad \lim_{\tau \to \infty} t(\tau) \leq K \int_0^\infty e^{-\nu \tau}\, d\tau=\frac{K}{\nu}<\infty, $$
which implies that the corresponding solution of the original system (\ref{dilution.model}) will reach $E_w$ in finite time.
\end{proof}

In the remainder of this section, we present sufficient conditions for global stability of $E_w$ under the  rescaled system (\ref{rescaled.model}).

\begin{thm}
\label{gas_of_ew}
        If $f(1) -\alpha<1$ and $\beta>0$ is sufficiently small, then $E_w$ attracts all solutions of (\ref{rescaled.model}) in $\mathring \Omega$.
\end{thm}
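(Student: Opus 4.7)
My plan is to deduce $(S,u,w)\to E_w=(1,0,w_0)$ in three stages matched to the three variables, exploiting $f(1)-\alpha<1$ and smallness of $\beta$.

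\emph{Stage 1.} I would first bound $u$ asymptotically. Rewriting
\[
\dot u=\Bigl[-1+(1-\tfrac{w}{w_0})(f(S)-\alpha)\Bigr]u+(1-\tfrac{w}{w_0})\beta w,
\]
and using $f(S)-\alpha\le f(1)-\alpha<1$ with $1-w/w_0\in[0,1]$, the bracketed coefficient is at most $-\delta$ for $\delta:=1-\max\{0,f(1)-\alpha\}>0$. Hence $\dot u\le-\delta u+\beta w_0$, so $\limsup_{t\to\infty}u(t)\le u^*:=\beta w_0/\delta=O(\beta)$.

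\emph{Stage 2 (the crux).} Since $f(0)=g(0)=0$, I would fix $S_0\in(0,1)$ once, independently of $\beta$, so that $w_0\,g(S_0)\le(1-S_0)/3$; then demand $\beta$ small enough that $u^*f(S_0)\le(1-S_0)/3$ and $\beta<g(S_0)/2$ (in particular $\beta<g(1)$, so $E_w$ is LES by Section~\ref{sec5}). For $t$ large enough that $u(t)\le u^*$, monotonicity of $f,g$ and $(1-w/w_0)\le 1$ give
\[
\dot S\ge(1-S_0)-uf(S_0)-w_0 g(S_0)\ge(1-S_0)/3>0
\]
on $\{S\le S_0\}$, so $S$ is driven into and trapped in $\{S\ge S_0\}$ after some finite time $T_1$. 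On this set, $g(S)-\beta\ge g(S_0)/2>0$ forces $\dot w\ge(1-w/w_0)(g(S_0)/2)w\ge 0$, so $w$ is nondecreasing from $T_1$ onward. Since $\dot w\ge-\beta w$ holds along the entire trajectory, $w(T_1)\ge w(0)e^{-\beta T_1}>0$, and $w$ has a limit $w^*\in(0,w_0]$. If $w^*<w_0$, the above bound would keep $\dot w\ge(1-w^*/w_0)(g(S_0)/2)w(T_1)>0$ indefinitely, contradicting $w\le w_0$; hence $w^*=w_0$.

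\emph{Stage 3.} As $1-w/w_0\to 0$, the coupling terms $(1-w/w_0)[(f(S)-\alpha)u+\beta w]$ and $(1-w/w_0)[uf(S)+wg(S)]$ vanish, since their other factors remain bounded along the trajectory. Thus $\dot u+u\to 0$ and $\dot S-(1-S)\to 0$, and a standard integrating-factor argument yields $u\to 0$ and $S\to 1$.

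\emph{Expected obstacle.} Stage 2 is the delicate step. The sign of $g(S)-\beta$ in the $w$-equation is a priori ambiguous; it must be pinned down via a uniform positive floor for $S$, which itself requires the Stage-1 bound on $u$. Arranging the quantifiers so that $S_0$ depends only on $(f,g,w_0)$, with $\beta$ tuned afterward against $S_0$, is what gives unambiguous meaning to ``$\beta$ sufficiently small''.
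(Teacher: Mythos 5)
Your proof is correct and follows essentially the same route as the paper's: bound $\limsup_{t\to\infty} u$ by $O(\beta)$, use that to force $\liminf_{t\to\infty} S$ above a level where $g(S)>\beta$, and then conclude $w\to w_0$ from the $w$-equation. The differences are only in packaging --- the paper encodes ``$\beta$ sufficiently small'' as $\beta<b^*$ where $b^*$ solves $G(b^*)=1$ for an auxiliary increasing function $G$, whereas you fix $S_0$ first and tune $\beta$ against it; your explicit Stage 3 (deducing $u\to 0$ and $S\to 1$ once $w\to w_0$) and your use of $\delta=1-\max\{0,f(1)-\alpha\}$ in Stage 1, which covers the case $f(1)<\alpha$, are in fact slightly more careful than the paper's version.
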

\begin{proof} We define an auxiliary function
$$ G(b):= g^{-1}(b)+ \frac{b w_0 f(g^{-1}(b))}{4(1+\alpha -f(1))} + \frac{b w_0}{4}, \quad b\in [0,g(1)).$$
Since $f(0)=g(0)=0$ and both $f$ and $g$ are increasing functions, it follows that $G(0)=0$, $G(g(1))>1$, and $G$ is  also an increasing function.
Thus, there exists a unique $b^* \in (0,g(1))$ such that $G(b^*)=1$. It is easy to see that if $\beta<b^*$, then $\beta<g(1)$, and $E_w$ is LES. 
In what follows, we will show that if $\beta<b^*$, then $E_w$ attracts all solutions of (\ref{rescaled.model}) in $\mathring \Omega$. 

The $u$-equation in (\ref{rescaled.model}) implies that 
\begin{align*}
        \dot u & = -u +(1-w/w_0)[(f(S)-\alpha) u + \beta w]  \\  
         & \leq -(1+\alpha -f(S)) u +  (1-w/w_0)\beta w \\
         & = -(1+\alpha -f(1)) u +  \frac{\beta w_0}{4}.
\end{align*}
Thus, we have that 
$$\limsup_{t\to\infty} u(t) \leq  u^*=\frac{\beta w_0}{4 (1+\alpha -f(1))}.$$

The $S$-equation in (\ref{rescaled.model}) implies that 
\begin{align*}
        \dot S & =1-S -(1-w/w_0)[f(S)u + g(S) w]  \\  
         & \geq 1-S -f(S) u - g(S)  \frac{\beta w_0}{4} \\
         & = 1-S -f(S) u^* - g(S)  \frac{\beta w_0}{4} + f(S)(u^* -u) \\
         & = 1-\left(S+\frac{\beta w_0 f(S)}{4 (1+\alpha -f(1))} + g(S) \frac{\beta w_0}{4}\right) + f(S)(u^* -u)\\
         & > 1-\left(S+\frac{b^* w_0 f(S)}{4 (1+\alpha -f(1))} + g(S) \frac{b^* w_0}{4}\right) + f(S)(u^* -u)+g(S) \frac{(b^*-\beta) w_0}{4}.
\end{align*}
In particular, if $S<g^{-1}(b^*)$, then we have that 
$$ \dot S > 1-G(b^*)+ f(S)(u^* -u) +g(S) \frac{(b^*-\beta) w_0}{4}=f(S)(u^* -u) +g(S) \frac{(b^*-\beta) w_0}{4}.$$
Therefore, for all sufficiently large times and for any $S<g^{-1}(b^*)$, we have that
$$  \dot S \geq g(S) \frac{(b^*-\beta) w_0}{8} >0.$$
This implies that $$\liminf_{t\to\infty} S(t) \geq  g^{-1}(b^*),$$
or equivalently, that $\liminf_{t\to\infty} g(S(t)) \geq b^* >\beta.$ In other words, $g(S(t)) >\beta$ for all sufficiently large times.

Finally, the $w$-equation in (\ref{rescaled.model}) implies that 
$$ \dot w = (1-w/w_0)[\alpha u + (g(S)- \beta) w] \geq (b^*- \beta) w(1-w/w_0),$$
for all sufficiently large times. Therefore, $ \lim_{t\to\infty} w(t)=w_0$. This implies that the corresponding solution converges to $E_w$. \end{proof}

\subsection{Numerical results for the clogged state}
In this part, we will show some numerical simulations and the property we showed is supported. Meanwhile, we will give some reasonable parameters to satisfy the condition of the clogged state. 

In Figure~\ref{clogged state}, the red curve represents the sample path image of the rescaled system, the blue curve represents the sample path image of the original system. The three subgraphs represent the nutrient-time, normal speices-time, and adherent species-time relationship in turn. The lower right graph gives us the $t - \tau$ relationship, which supports the theorem \ref{cloggingtime}.

\begin{figure}[h]
    \centering
    \subfloat{\includegraphics[width=0.45\linewidth]{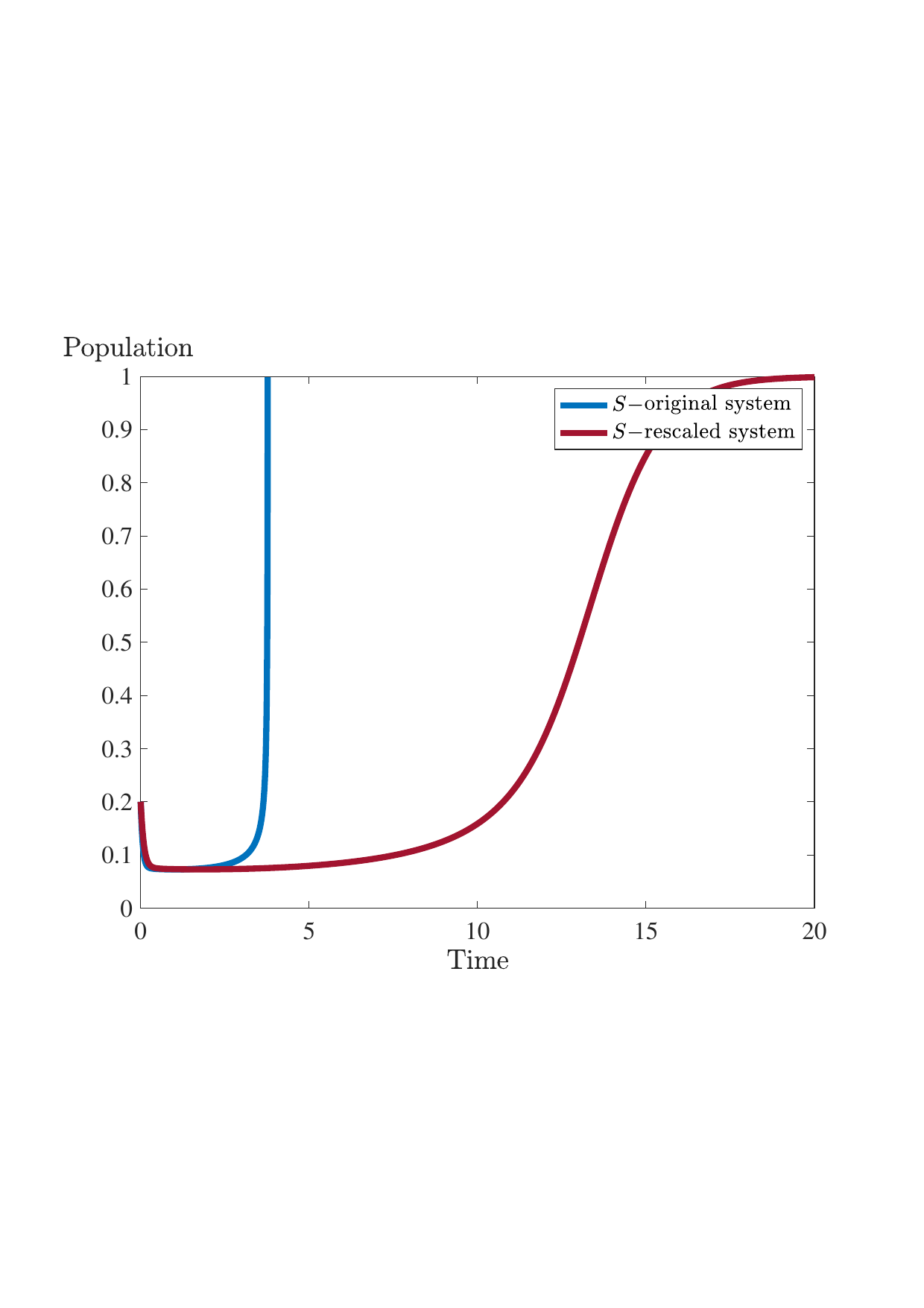}}
    \hfill
    \subfloat{\includegraphics[width=0.45\linewidth]{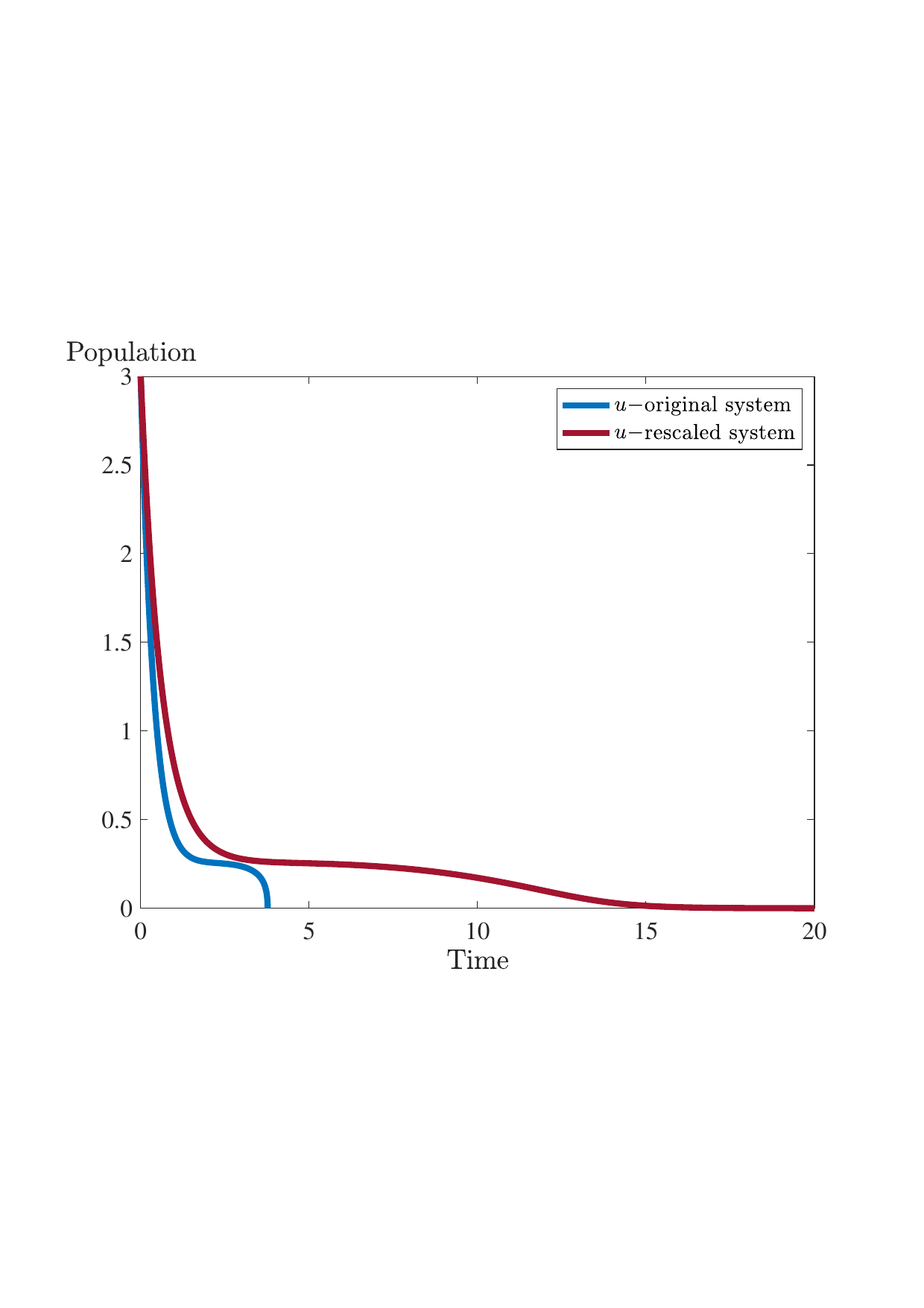}} \\
    
    \subfloat{\includegraphics[width=0.45\linewidth]{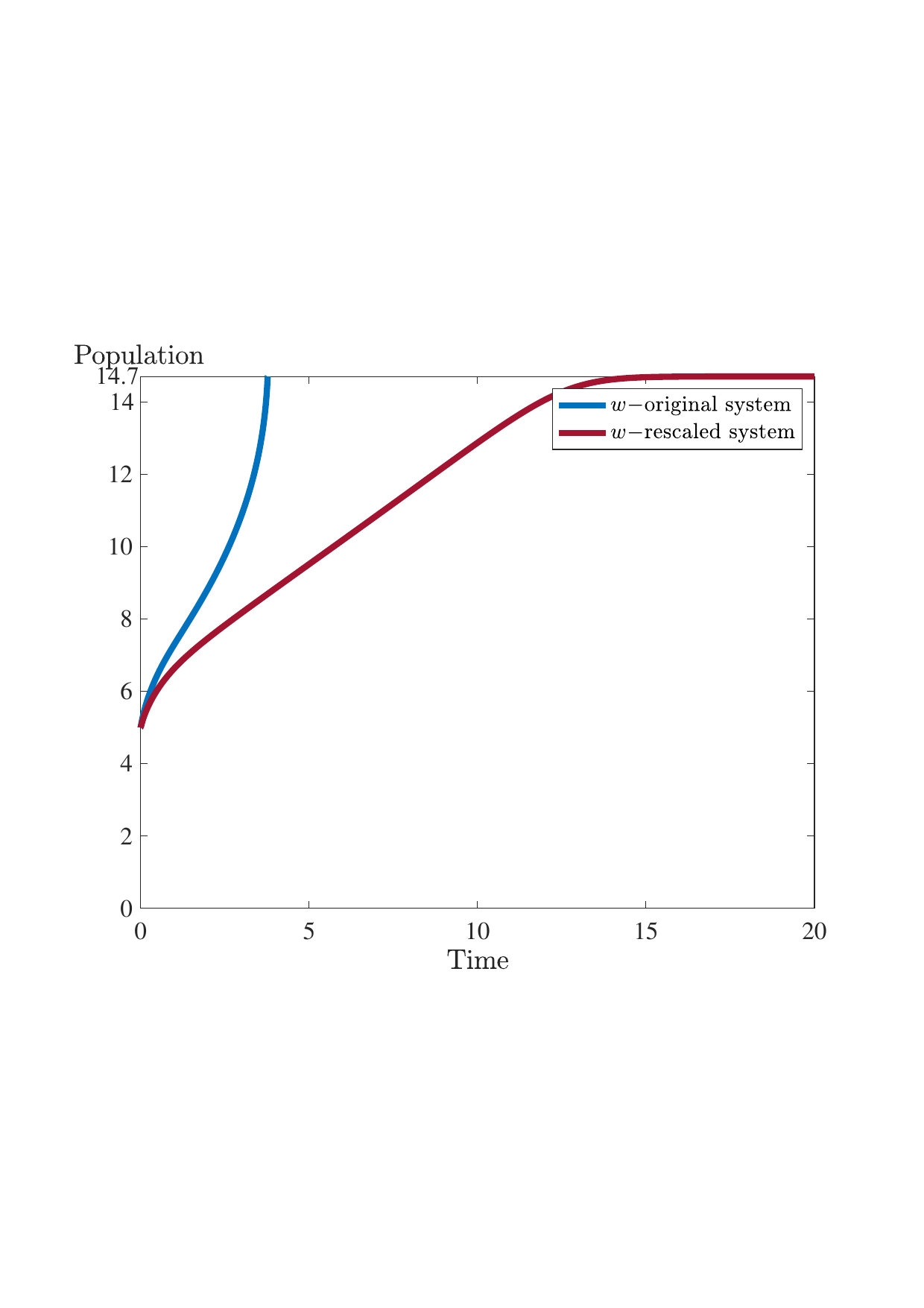}}    
    \hfill
    \subfloat{\includegraphics[width=0.45\linewidth]{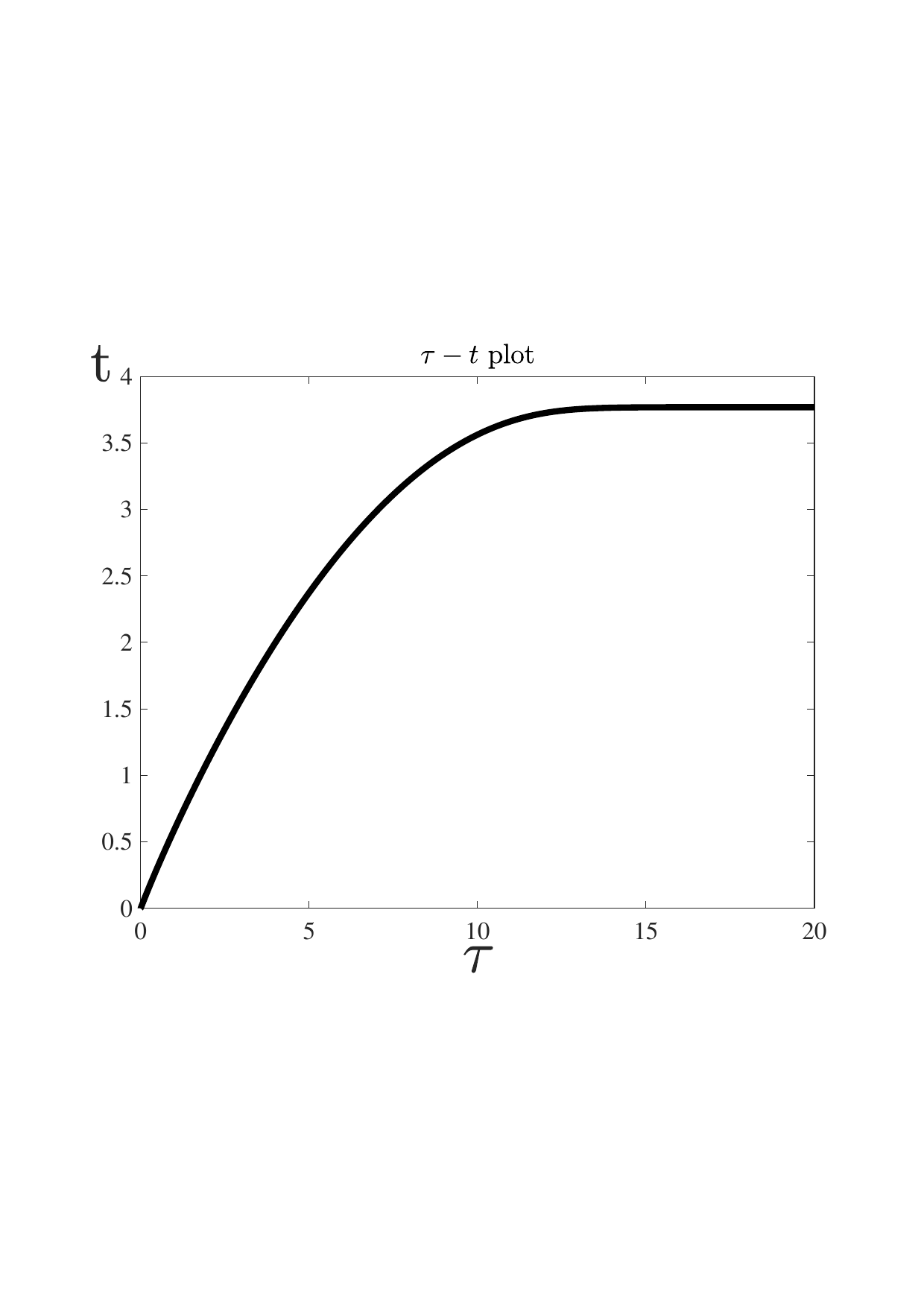}}
    \caption{This Figure shows a numerically simulated trajectory of the rescaled system (\ref{rescaled.model})corresponding to the initial condition $(S(0), u(0), w(0)) = (0.2, 3, 5) \in \Omega$. The parameters and the growth functions are given by $\alpha = 1, \beta = 0.1, w_{0} = 14.7$, $f(S) = \dfrac{1.03S}{3.7+S}$, and $g(S) = \dfrac{0.6S}{0.8+S}$, respectively. The lower right graph describes the relationship between the physical time $t$ and the rescaled time $\tau$. Numerically, the clogged time is $t_{\max} \sim 3.72$.}
    \label{clogged state}
\end{figure}

\newpage

In the theorem \ref{gas_of_ew}, we showed that $\beta > 0$ is sufficiently small implies that the global stability of $E_w$. However, we cannot show that $E_w$ is GAS if $\beta$ is not sufficiently small since there are some algebraic difficulties. We still believe $E_w$ is GAS under the hypothesis of theorem \ref{gas_of_ew}. In the plot below, this is the orbit of some different initial values.

\begin{figure}[htb]
    \centering
    \includegraphics[width=0.6\textwidth]{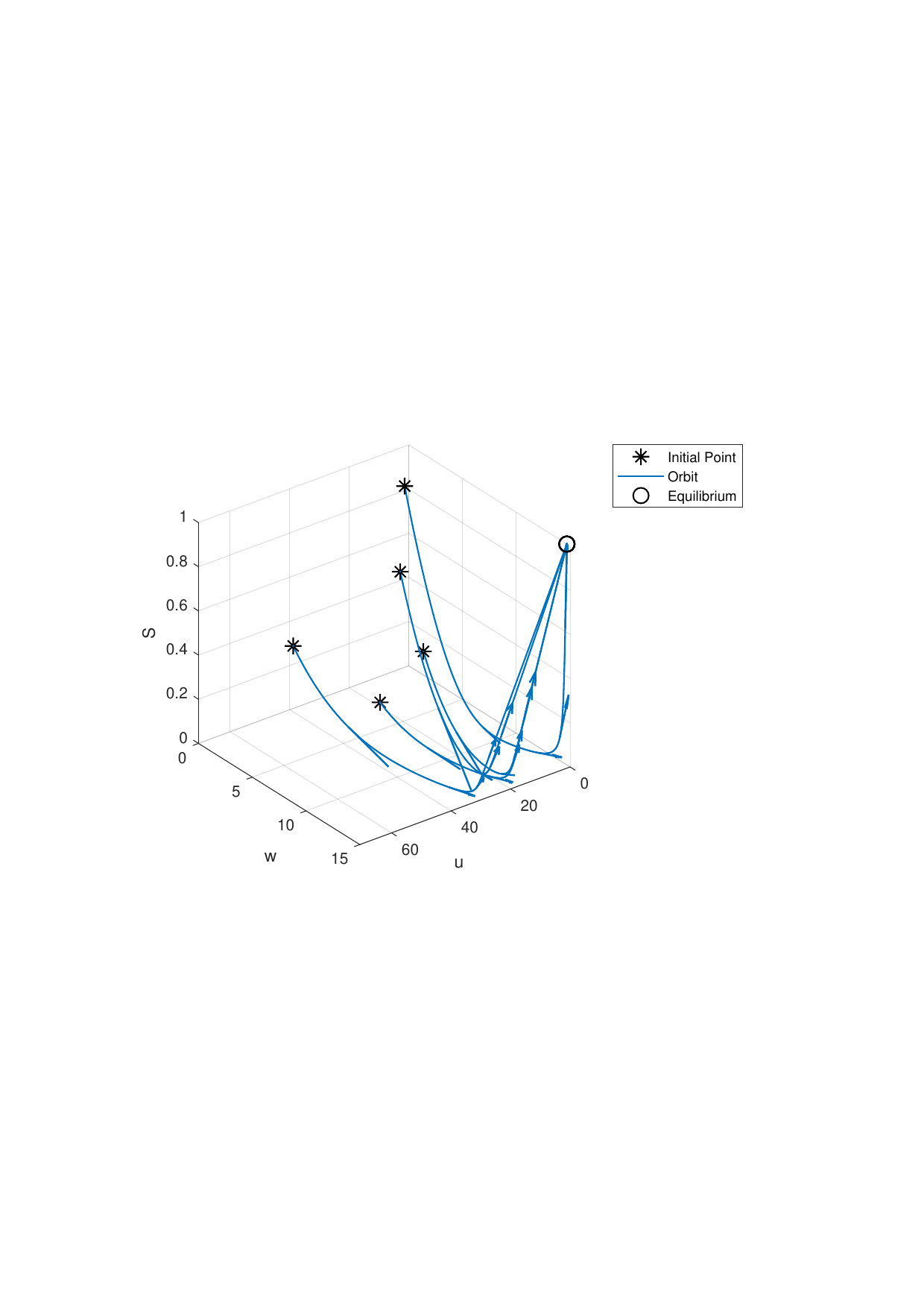}
    \caption{In this Figure, we illustrate the scenario in which the clogged state $E_w$ is globally attractive. The parameter values and the growth functions are given by  $\alpha = 1, \beta = 0.1, w_{0} = 14.7$, $f(S) = \dfrac{1.03S}{3.7+S}$, and $g(S) = \dfrac{0.6S}{0.8+S}$, respectively.  With these parameter values, the system admits no positive equilibria. Since
    $g(1) =\frac{1}{3} >\beta$, $E_w$ is stable. This Figure shows 5 orbits corresponding to the initial conditions chosen randomly from $\Omega$. All of  these orbits are attracted by $E_w$.}
    \label{fig: example}
\end{figure}

\section{Uniform persistence}\label{sec4}
Introduced by Freedman, Waltman \cite{Freedman1977257}, Gard \cite{Gard1980165,Gard198061}, Gard and Hallam \cite{Gard198061}, Hallam \cite{Hallam1978}, and Schuster, Sigmund and Wolff \cite{Schuster1978743}, the population-theoretic concept of persistence describes the long term survival of the species. Within the framework of some population model governing the dynamics of species $x$ we say that $x$ goes extinct if $\lim_{t \to\infty} x(t)=0$, and $x$ persists if  $\lim_{t \to\infty} x(t)\not=0$ provided that $x(0)>0$. Two types of persistence are distinguished: the case $\limsup _{t \to\infty} x(t)>0$ corresponds to {\bf weak persistence}, and the case $\liminf _{t \to\infty} x(t)>0$ corresponds to {\bf strong persistence}.

In this section, we aim to obtain sufficient conditions for the uniform (strong) persistence in the rescaled system (\ref{rescaled.model}), which is stated as follows:
$$\exists \epsilon > 0, \forall (S(0),u(0),w(0)) \in \mathring{\Omega}: \quad
\liminf_{t\rightarrow\infty}{S(t)} \geq \epsilon,\ \liminf_{t\rightarrow\infty}{u(t)} \geq \epsilon, \ \liminf_{t\rightarrow\infty}{w(t)} \geq \epsilon.$$
We introduce the concept of $\rho$-persistence. Let $X$ be an arbitrary nonempty set and $\rho: X \rightarrow \mathbb{R}^+$.
\begin{definition}\cite{Smith2016DynamicalSA}
     A semiflow $\Phi : \mathbb{R}^+ \times X \rightarrow X$ is called weakly $\rho$-persistent, if
\begin{equation*}
    \limsup_{t\rightarrow \infty} \rho(\Phi(t, x)) > 0 \quad \forall x \in X, \rho(x) > 0.
\end{equation*}
     $\Phi$ is called strongly $\rho$-persistent, if
\begin{equation*}
    \liminf_{t\rightarrow \infty} \rho(\Phi(t, x)) > 0 \quad \forall x \in X, \rho(x) > 0.
\end{equation*}
A semiflow $\Phi : J \times X \rightarrow X$ is called uniformly weakly $\rho$-persistent, if
there exists some $\epsilon > 0$ such that
\begin{equation*}
    \limsup_{t\rightarrow \infty} \rho(\Phi(t, x)) > \epsilon \quad \forall x \in X, \rho(x) > 0.
\end{equation*}
 $\Phi$ is called uniformly (strongly) $\rho$-persistent, if
\begin{equation*}
    \liminf_{t\rightarrow \infty} \rho(\Phi(t, x)) > \epsilon \quad \forall x \in X, \rho(x) > 0.
\end{equation*}
\end{definition}

\begin{lem}{\textbf{(Butler-McGehee lemma)} \cite{MR0848269,Wolkowicz1984}}
\label{Butler-Mcghee} Let $\Phi : \mathbb{R}^+ \times X \rightarrow X$ be a  smooth semiflow on $X \subseteq \mathbb{R}^n.$ For $x_0 \in X$, let $\omega(x_0)$
denote the omega-limit set of $x_0$.  Suppose that $p \in X$ is an isolated hyperbolic critical point of $\Phi$ such that $p \in \omega(x_0)$. Then the following dichotomy holds:
\begin{enumerate}
    \item Either $\omega(x_0) = \left\{p\right\}$;
    \item Or there exist points $p^s$ and $p^u$ in $\omega(x_0)$, with $p^s \in W^s(p) \backslash \{p\}$ and $p^u \in W^u(p) \backslash \{p\}$, where $W^s (p)$ and $W^u (p)$ denote the stable and unstable manifolds of $p$ respectively.
\end{enumerate}

\end{lem}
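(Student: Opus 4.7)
The plan is to argue by contradiction. Suppose $p\in\omega(x_0)$ and $\omega(x_0)\neq\{p\}$; I will produce the two points $p^s$ and $p^u$ claimed in alternative (2). The two main tools are (i) the local stable/unstable manifold theorem at the hyperbolic critical point $p$, which furnishes a neighborhood $U$ of $p$ and local manifolds $W^s_{\mathrm{loc}}(p)$ and $W^u_{\mathrm{loc}}(p)$ with the characterization that any forward orbit remaining in $\overline{U}$ for all $t\geq 0$ lies on $W^s_{\mathrm{loc}}(p)$ and any full orbit remaining in $\overline{U}$ for all $t\leq 0$ lies on $W^u_{\mathrm{loc}}(p)$; and (ii) the standard topological properties of $\omega$-limit sets of precompact semiflow orbits, namely that $\omega(x_0)$ is nonempty, compact, connected, and \emph{invariant} in the strong sense that every $q\in\omega(x_0)$ admits a full backward orbit lying inside $\omega(x_0)$.

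Using isolation of $p$, shrink $U$ so that $p$ is the only critical point in $\overline{U}$ and the local manifold theorem holds on $\overline{U}$. Since $\omega(x_0)$ is connected, contains $p$, and differs from $\{p\}$, one obtains $\omega(x_0)\cap\partial U\neq\varnothing$. The trajectory $\Phi(t,x_0)$ must enter $U$ infinitely often to accumulate at $p$, and it must also exit $U$ infinitely often, because otherwise the entire tail would stay in $\overline{U}$ and tool (i) would force $\omega(x_0)=\{p\}$. I would then select entry times $s_n\to\infty$ and subsequent exit times $u_n>s_n$ with $\Phi(s_n,x_0),\Phi(u_n,x_0)\in\partial U$, $\Phi(t,x_0)\in\overline{U}$ for every $t\in[s_n,u_n]$, and such that the trajectory approaches within $1/n$ of $p$ on $[s_n,u_n]$; in particular $u_n-s_n\to\infty$.

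To build $p^s$, extract a convergent subsequence of $\Phi(s_n,x_0)$ with limit $p^s\in\omega(x_0)\cap\partial U$. For any fixed $T>0$ the segment $\Phi([s_n,s_n+T],x_0)$ lies in $\overline{U}$ for all large $n$, so by continuous dependence on initial data the forward orbit of $p^s$ stays in $\overline{U}$ for all $t\geq 0$; tool (i) then yields $p^s\in W^s_{\mathrm{loc}}(p)$, and $p^s\neq p$ since $p^s\in\partial U$ while $p$ is interior. The construction of $p^u$ is symmetric: pass to a convergent subsequence of $\Phi(u_n,x_0)$ to obtain $p^u\in\omega(x_0)\cap\partial U$. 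Here the semiflow subtlety appears, because one needs a \emph{backward} orbit of $p^u$ that stays in $\overline{U}$. Tool (ii) provides some backward orbit through $p^u$ inside $\omega(x_0)$; to see that it stays in $\overline{U}$, I would apply an Arzel\`a--Ascoli compactness argument to the time-shifted segments $t\mapsto\Phi(u_n+t,x_0)$ for $t\in[-(u_n-s_n),0]$, all of which are contained in $\overline{U}$, and extract a limiting full backward trajectory through $p^u$ lying in $\overline{U}$. Tool (i) then gives $p^u\in W^u_{\mathrm{loc}}(p)\setminus\{p\}$.

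The principal obstacle is exactly this semiflow step for $p^u$: transferring trajectory segments that remain in $\overline{U}$ for arbitrarily long negative-time durations into a genuine backward orbit through $p^u$ contained in $\overline{U}$. This requires a careful combination of the strong invariance of $\omega$-limit sets with compactness of the shifted backward segments; once such a backward orbit is in hand, the hyperbolic local manifold structure from tool (i) immediately places $p^u$ on $W^u_{\mathrm{loc}}(p)$. The verification $p^s,p^u\neq p$ is then automatic because both points were selected on $\partial U$.
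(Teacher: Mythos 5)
The paper does not prove this lemma: it is quoted verbatim as a known result with citations to Butler--Waltman and Wolkowicz, so there is no in-paper argument to compare yours against. Judged on its own, your sketch is the standard proof of the Butler--McGehee lemma and its architecture is sound: isolate $p$ in a neighborhood $U$ on which the local stable/unstable manifold characterizations hold, show the orbit must enter and leave $U$ infinitely often (otherwise the tail lies in $\overline{U}$ and $\omega(x_0)=\{p\}$), take limits of entry points on $\partial U$ to get $p^s\in W^s(p)\setminus\{p\}$, and limits of exit points together with an Arzel\`a--Ascoli/diagonal extraction of a backward orbit in $\overline{U}$ to get $p^u\in W^u(p)\setminus\{p\}$. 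You also correctly identify the one genuinely delicate step for semiflows, namely manufacturing a backward orbit through $p^u$ that stays in $\overline{U}$. Two small points deserve explicit justification rather than assertion: (i) the claim $u_n-s_n\to\infty$ does not follow merely from the orbit passing within $1/n$ of $p$; you need that no boundary point of $U$ reaches the fixed point $p$ in bounded time, which for a smooth (ODE-generated) semiflow follows from backward uniqueness of solutions, and (ii) the connectedness remark giving $\omega(x_0)\cap\partial U\neq\varnothing$ only works if $\omega(x_0)$ has points outside $\overline{U}$; the case $\omega(x_0)\subset U$ must be excluded separately (e.g.\ via invariance of $\omega(x_0)$ and $W^s_{\mathrm{loc}}(p)\cap W^u_{\mathrm{loc}}(p)=\{p\}$), though your ``exit infinitely often'' argument already covers what you actually use.
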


We recall that the eigenvalues of $J(E_0)$ are
$$ \lambda_1 = - 1, \quad  \lambda_{2,3} = \displaystyle\frac{(A + B) \pm \sqrt{(A - B) ^2 + 4 \alpha\beta}}{2}. $$ 
In Section \ref{sec3}, we proved that case $\lambda_2<0$ corresponds to extinction. Now we will prove that the case $\lambda_2>0$ corresponds to persistence. 
Throughout this section we assume that all equilibria of (\ref{dilution.model}) and (\ref{rescaled.model}) are hyperbolic.
 
Due to Remark \ref{remark}, the following two results are valid for both systems (\ref{dilution.model}) and (\ref{rescaled.model}), so it is sufficient
to prove them for system (\ref{dilution.model}).
 
\begin{lem}
\label{lem1 E_0}
    If $E_0$ is unstable, then $\mathring{\Omega} \bigcap W^s(E_0)=\emptyset.$
\end{lem}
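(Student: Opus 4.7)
The plan is to build a linear Lyapunov-type functional that grows along every interior orbit in a neighborhood of $E_0$, which is incompatible with convergence to $E_0$. Since $E_0$ is unstable, Theorem \ref{thm_e0_las} tells us that the principal eigenvalue $\lambda_2$ of the off-diagonal-positive matrix $M=\left[\begin{array}{cc} A & \beta  \\ \alpha & B \end{array}\right]$ is strictly positive; applying Perron--Frobenius to a nonnegative shift $M+cI$ yields a strictly positive left eigenvector $(U,W)>0$ with $(U,W)M=\lambda_2(U,W)$. Define $L(u,w)=Uu+Ww$, so $L>0$ on $\mathring\Omega$, and observe that $L$ remains positive along trajectories because $\dot u \geq 0$ when $u=0$ and $w>0$, while $\dot w \geq 0$ when $w=0$ and $u>0$, so $\mathring\Omega$ is forward invariant.

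Next I would differentiate $L$ along solutions of (\ref{dilution.model}):
$$\dot L = u\bigl[U(f(S)-D(w)-\alpha)+W\alpha\bigr] + w\bigl[U\beta+W(g(S)-\beta)\bigr].$$
Evaluated at $E_0=(1,0,0)$, the bracketed coefficients reduce to $UA+W\alpha=\lambda_2 U$ and $U\beta+WB=\lambda_2 W$ by the left-eigenvector identity. Since $f,g,D$ are continuous and $(S,w)\mapsto(1,0)$ at $E_0$, for every $\delta\in(0,\lambda_2)$ there exists a neighborhood $N_\delta$ of $E_0$ in $\Omega$ on which
$$\dot L(S,u,w)\;\geq\;(\lambda_2-\delta)\,L(u,w).$$

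To conclude, suppose for contradiction that some orbit starting in $\mathring\Omega$ lies in $W^s(E_0)$, so $(S(t),u(t),w(t))\to E_0$. Choose $\delta\in(0,\lambda_2)$ and pick $T$ so large that the orbit remains in $N_\delta$ for $t\geq T$. Integrating the differential inequality gives $L(t)\geq L(T)\,e^{(\lambda_2-\delta)(t-T)}$; since the orbit stays in $\mathring\Omega$, we have $L(T)>0$, so $L(t)\to\infty$. This contradicts $L(t)\to 0$, which is forced by $u(t),w(t)\to 0$ together with $U,W>0$. Hence no interior orbit converges to $E_0$, i.e.\ $\mathring\Omega\cap W^s(E_0)=\emptyset$. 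By Remark \ref{remark}, the same conclusion transfers to the rescaled system (\ref{rescaled.model}) since the two systems share orbits on $\Omega\cap\{w<w_0\}\supset\mathring\Omega$.

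The main obstacle is essentially bookkeeping: guaranteeing the positive left eigenvector via Perron--Frobenius (routine, and already used in Theorem \ref{thm_e0_gas}) and ensuring that the coefficients in $\dot L$ can be pushed arbitrarily close to those at $E_0$ uniformly on a small neighborhood, so that the exponential growth estimate for $L$ is uniform over the interval $[T,\infty)$. No compactness issue arises because convergence to $E_0$ makes the orbit eventually trapped in any chosen neighborhood.
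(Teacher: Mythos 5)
Your proposal is correct and follows essentially the same route as the paper's proof: both use the positive left eigenvector $(U,W)$ of $M$ to form $L=Uu+Ww$, derive the differential inequality $\dot L \gtrsim \lambda_2 L$ on a neighborhood of $E_0$ by continuity (the paper uses the explicit constant $\lambda_2/2$ where you use $\lambda_2-\delta$), and obtain the contradiction $L\to\infty$ versus $L\to 0$. No substantive differences.
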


\begin{proof} Suppose that $x_0 \in \mathring{\Omega} \bigcap W^s(E_0)$, then 
the corresponding solution $(S(t),u(t),w(t)) \in \mathring{\Omega}$ through $x_0$ converges to $E_0$ as $t \to +\infty$. 
Since $E_0$ is unstable and hyperbolic, we have that $\lambda_2 >0$ and the matrix $M$ admits a positive left eigenvector $(U,W)$.
We introduce an auxiliary function $L(u(t),w(t)) = Uu(t)+Ww(t)>0$, such that 
$$
      \dot{L} = (U,W) J(S,w) (u,w)^T = (U,W) M (u,w)^T + (U,W) (J(S,w)-M) (u,w)^T,$$ where $$ J(S,w)= \left[\begin{array}{cc}
        f(S) - D(w) - \alpha & \beta  \\
        \alpha & g(S) - \beta
    \end{array}\right ].
$$
Since $M=J(1,0)$ and the solution $(S(t),u(t),w(t))$ converges to $E_0=(1,0,0)$ as $t \to +\infty$, there exists $T \geq 0$ such that for all $t>T$,
$\| J(S(t),w(t))-M\| \leq \frac{\lambda_2}{2},$ hence
$$ \dot{L} \geq (U,W) (M -\frac{\lambda_2}{2}) (u,w)^T=\frac{\lambda_2}{2} L >0, \quad t>T,$$
which implies that $L(u(t),w(t)) \to +\infty$ as $t\to \infty$, which contradicts our assumption that $u(t),w(t) \to 0$. Therefore, $\mathring{\Omega} \bigcap W^s(E_0)=\emptyset.$ \end{proof}

\begin{lem}
\label{lem2 E_0}
    If $E_0$ is unstable, then for any $x_0 \in \mathring{\Omega},\ E_0 \notin \omega(x_0).$ 
\end{lem}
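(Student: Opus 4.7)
The plan is to proceed by contradiction, combining the Butler--McGehee Lemma \ref{Butler-Mcghee} with the Lyapunov estimate already developed in the proof of Lemma \ref{lem1 E_0}. Assume for contradiction that $E_0 \in \omega(x_0)$ for some $x_0 \in \mathring{\Omega}$. Since $E_0$ is hyperbolic (and hence isolated) by our standing assumption, Lemma \ref{Butler-Mcghee} yields the following dichotomy: either $\omega(x_0) = \{E_0\}$, in which case the forward orbit of $x_0$ converges to $E_0$ and thus $x_0 \in \mathring{\Omega} \cap W^s(E_0)$, directly contradicting Lemma \ref{lem1 E_0}; or $\omega(x_0)$ contains a point $p^s \in W^s(E_0) \setminus \{E_0\}$. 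The remainder of the argument is devoted to ruling out the second possibility.

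In the second case, I would first show that $p^s$ must lie on the invariant $S$-axis $\{u=w=0\}$, and then use bidirectional invariance of $\omega(x_0)$ to reach a contradiction with boundedness. For the first step, the forward orbit through $p^s$ lies in $\omega(x_0) \subset \Omega$ and converges to $E_0$. Taking $(U,W)>0$ to be the left eigenvector of $M$ associated with $\lambda_2>0$ and defining $L(u,w) = Uu + Ww$, the same Jacobian-continuity argument as in Lemma \ref{lem1 E_0} gives $\dot L \geq \tfrac{\lambda_2}{2} L$ along this orbit for all sufficiently large times, because $(S,w) \to (1,0)$ drives $J(S,w) - M$ to zero. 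If $L(p^s) > 0$, this estimate forces $L(u(t),w(t)) \to \infty$, contradicting $(u(t),w(t)) \to (0,0)$. Hence $L(p^s) = 0$, which (since $U,W>0$ and $u,w \geq 0$) forces $u=w=0$ at $p^s$, so $p^s = (S^*,0,0)$ for some $S^* \in [0,1)$.

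For the second step, invariance of $\omega(x_0)$ under the full flow places the backward orbit through $p^s$ inside $\omega(x_0) \subset \Omega$. On the invariant $S$-axis the dynamics of (\ref{dilution.model}) reduce to $\dot S = 1-S$, whose solution through $(S^*,0,0)$ is $S(t) = 1 - (1-S^*) e^{-t}$. Since $S^* < 1$, one has $S(t) \to -\infty$ as $t \to -\infty$, so the backward orbit exits $\Omega$ in finite backward time. This contradicts $\omega(x_0) \subset \Omega$, completing the proof. By the orbit-equivalence recorded in Remark \ref{remark}, the conclusion transfers verbatim to the rescaled system (\ref{rescaled.model}).

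The main obstacle I anticipate is the first substep of the second case: transferring the Lyapunov estimate $\dot L \geq \tfrac{\lambda_2}{2} L$ from the ``whole solution converges to $E_0$'' regime of Lemma \ref{lem1 E_0} to an orbit merely known to lie in $\omega(x_0)$. The estimate should go through verbatim, since the orbit through $p^s$ also converges to $E_0$ and so eventually stays in any prescribed neighborhood of $(1,0)$; care is only needed to choose the threshold time $T$ with respect to the orbit through $p^s$ rather than the original orbit through $x_0$. Once $p^s$ is pinned to the $S$-axis, the finishing argument from the explicit formula $S(t) = 1 - (1-S^*)e^{-t}$ is routine.
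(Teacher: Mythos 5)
Your proposal is correct and follows essentially the same route as the paper: Butler--McGehee plus Lemma \ref{lem1 E_0} to produce a point $q\in W^s(E_0)\cap\omega(x_0)$ other than $E_0$, identification of $q$ as a point $(S^*,0,0)$ on the $S$-axis, and then unboundedness of the full orbit through $q$ (via $\dot S=1-S$ backward in time) contradicting $\omega(x_0)\subset\Omega$. The only difference is that you spell out, via the Lyapunov function $L=Uu+Ww$, why $q$ must lie on the $S$-axis, a step the paper asserts without detail; this is a welcome addition rather than a deviation.
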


\begin{proof} If $x_0 \in \mathring{\Omega}$, then $x_0 \not\in W^s(E_0)$ by Lemma \ref{lem1 E_0}. Hence, if $E_0 \in  \omega(x_0),$ then due to Lemma
\ref{Butler-Mcghee}, there is a point $q \in W^s (E_0) \bigcap \omega(x_0)$ such that $q\not=E_0$. Since $\omega(x_0) \subset \bar\Omega$, it follows that
$q \in W^s (E_0) \bigcap \bar \Omega$.  Hence, $q=(s^*,0,0)$ for some $s^*\not=1$. This implies that the orbit $O(q)$ is unbounded, and since $\omega(x_0)$,
it implies that $O(q) \subset \omega(x_0)$, so that $\omega(x_0)$ must be also unbounded. This contradict the boundedness of solutions. Therefore, $E_0 \notin \omega(x_0).$
\end{proof}

In the remainder of this section, we use $\rho(S,u,w)=u+w$ and study the global properties of the flow of the rescaled system (\ref{rescaled.model}).

\begin{lem}
\label{persistence_of_(u+w)}
If $E_0$ is unstable, then the system (\ref{rescaled.model}) is uniformly weakly $\rho-$persistent.
\end{lem}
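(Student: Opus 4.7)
I would argue by contradiction. Suppose (\ref{rescaled.model}) is not uniformly weakly $\rho$-persistent. Then for every $n\in\mathbb{N}$ one can find an initial point $x_n=(S_n(0),u_n(0),w_n(0))\in\mathring{\Omega}$ with $u_n(0)+w_n(0)>0$ whose orbit satisfies $\limsup_{t\to\infty}(u_n(t)+w_n(t))\leq 1/n$. The plan is to exploit the hypothesis $\lambda_2>0$: any tail of such an orbit must lie in a neighborhood of $E_0$ where the linearization forces $u+w$ to grow exponentially, contradicting the uniform bound $u+w\leq 2/n$.

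The first step is to localize the orbit near $E_0=(1,0,0)$. Once $u_n(t)+w_n(t)\leq 2/n$ for all $t\geq T_n$, the $S$-equation yields $\dot S\geq 1-S-(f(1)+g(1))(u+w)$, and a standard comparison argument gives $\liminf_{t\to\infty} S_n(t)\geq 1-K/n$ for a constant $K$ independent of $n$. Therefore, for any prescribed neighborhood $\mathcal N$ of $(S,w)=(1,0)$, choosing $n$ sufficiently large ensures that the tail of the orbit of $x_n$ remains in $\Omega\cap\{(S,w)\in\mathcal N\}$.

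The second step is a Lyapunov-type argument using the positive left eigenvector of $M$. Since $M$ has positive off-diagonal entries $\alpha,\beta>0$ and simple principal eigenvalue $\lambda_2>0$, the Perron-Frobenius theorem yields a strictly positive row eigenvector $(U,W)$ with $(U,W)M=\lambda_2(U,W)$. Writing the $(u,w)$-subsystem as $(\dot u,\dot w)^T=\widetilde M(S,w)(u,w)^T$ with $\widetilde M$ continuous and $\widetilde M(1,0)=M$, and setting $L(t)=Uu_n(t)+Ww_n(t)$, a direct computation yields
$$\dot L=\lambda_2 L+(U,W)\bigl(\widetilde M(S,w)-M\bigr)(u,w)^T.$$
Using the equivalence $\min(U,W)(u+w)\leq L\leq \max(U,W)(u+w)$, one can shrink $\mathcal N$ so that the perturbation term is dominated by $(\lambda_2/2)L$, giving $\dot L\geq (\lambda_2/2)L$ on $\mathcal N$. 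For the chosen $n$, this forces $L(t)\geq L(T_n)\exp\bigl((\lambda_2/2)(t-T_n)\bigr)\to\infty$, contradicting $L(t)\leq 2\max(U,W)/n$.

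The main obstacle is the calibration in the third step: the neighborhood $\mathcal N$ must be chosen small enough that the operator-norm estimate $\|(U,W)(\widetilde M-M)(u,w)^T\|\leq (\lambda_2/2)L$ holds uniformly on $\mathcal N$, which in turn dictates the required smallness of $1/n$ through the comparison bound $\liminf S_n\geq 1-K/n$ in the first step. Once this modulus-of-continuity bookkeeping is done, the argument reduces to standard linear instability along the Perron eigendirection of $M$, in the same spirit as the functional $L=Uu+Ww$ used in the proofs of Theorem \ref{thm_e0_gas} and Lemma \ref{lem1 E_0}.
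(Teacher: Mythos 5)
Your argument is correct and is essentially the paper's own proof: both localize the tail of a hypothetically non-persistent orbit near $E_0$ via a comparison estimate on the $S$-equation, then use the positive left Perron eigenvector $(U,W)$ of $M$ and the functional $L=Uu+Ww$ to obtain a differential inequality $\dot L\geq c\lambda_2 L>0$ in that neighborhood, contradicting the assumed smallness of $u+w$. The only cosmetic differences are that the paper fixes a single uniform threshold $\delta$ up front (via the entrywise bound $J(S,w)\geq M-\tfrac{\lambda_2}{2}I$) instead of running a sequential $1/n$ contradiction, and keeps the factor $(1-w/w_0)$ outside the matrix rather than absorbing it into $\widetilde M$.
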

\begin{proof} Since $E_0$ is unstable, we know that $\lambda_2>0$. Recall that $\lambda_2$ is the principal eigenvalue of the matrix
$$ M=J(1,0)= \left[\begin{array}{cc} f(1) - D(0) - \alpha & \beta  \\  \alpha & g(1) - \beta \end{array}\right ]. $$
Consequently, the principal eigenvalue of the matrix  $\hat M:=M-\frac{\lambda_2}{2} I_{2\times 2}$ is equal to $\frac{\lambda_2}{2}>0$, moreover, the matrix $\hat M$
has the same positive principal left eigenvector $(U,W)>0$ as $M$ does, that is, $(U,W)\hat M = \frac{\lambda_2}{2}(U,W)$. By continuity, there exists $\epsilon>0$ such that for all $S>1-\epsilon$ and for all $w<\epsilon$,
$$ J(S,w)= \left[\begin{array}{cc} f(S) - D(w) - \alpha & \beta  \\  \alpha & g(S) - \beta \end{array}\right ] \geq \hat M. $$
Let $$\delta=\min\left(\epsilon, \frac{ \epsilon}{\max(f(1),g(1))},\frac{w_0}{2} \right)>0.$$
Now suppose that there exists a solution $(S,u,w) \in \Omega$ of (\ref{rescaled.model}) such that $u(0)+w(0)>0$ and $\limsup_{t \to \infty} (u(t)+w(t)) < \delta.$
Then for all sufficiently large times, we have that $u(t)+w(t) <\delta \leq \epsilon$, hence $1/2<1-w(t)/w_0 < 1$, and
$$ \dot S =1-S - (1-w/w_0)[u f(S) + w g(S)] \geq 1-S - \delta \max(f(1),g(1)) > 1-\epsilon -S,$$
which implies that there exists some $T \gg 1$ such that $S(t) >1 -\epsilon$ and $w(t)<\epsilon$ for all $t>T$.

Using the auxiliary function $L=Uu(t)+Ww(t)$, we find that
$$ \dot L=(1-w/w_0)(U,W)J(S,w)(u,w)^T \geq \frac{1}{2}(U,W) \hat M (u,w)^T = \frac{\lambda_2}{4}L>0,$$
for all $t>T$. This implies that $L \to \infty$ as $t\to \infty$, which contradicts the inequality $\limsup_{t \to \infty} (u(t)+w(t)) < \delta.$

Therefore, for any solution $(S,u,w) \in \Omega$ with $u(0)+w(0)>0$, we must have that $\limsup_{t \to \infty} (u(t)+w(t)) \geq \delta.$ This concludes the
proof of uniform weak $\rho-$persistence of (\ref{rescaled.model}). \end{proof}

\begin{thm}
\label{rho-strong-persistent}
If $E_0$ is unstable, then the system (\ref{rescaled.model}) is uniformly strongly $\rho-$persistent.
\end{thm}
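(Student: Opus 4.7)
The plan is to bootstrap the uniform weak $\rho$-persistence already established in Lemma \ref{persistence_of_(u+w)} into uniform strong $\rho$-persistence by means of the standard Hale--Waltman / Smith--Thieme theorem (e.g.\ Theorem 4.13 in \cite{Smith2016DynamicalSA}), which says that on a forward-invariant compact set, uniform weak $\rho$-persistence together with an acyclic covering of the extinction set by isolated, weakly $\rho$-repelling invariant sets implies uniform strong $\rho$-persistence.

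First I would identify the extinction set $X_0 = \{(S,u,w) \in \Omega : u = w = 0\}$ and describe the dynamics on it. On $X_0$ the rescaled system reduces to $\dot S = 1 - S$, so every trajectory in $X_0$ converges to $E_0 = (1,0,0)$. Consequently, $\{E_0\}$ is the unique compact invariant set in $X_0$ and is the $\omega$-limit of every point of $X_0$, so the ``acyclic covering'' condition reduces to the singleton $\{E_0\}$, which is trivially acyclic. The remaining faces of $\partial\Omega$ either are not invariant (for points with $S = 0$, or with exactly one of $u,w$ equal to zero and the other positive, the sign checks of the vector field in the proof of Theorem \ref{well-posedness} show trajectories immediately enter $\mathring\Omega$) or consist of the invariant face $\{w = w_0\}$, whose orbits converge to $E_w$; since $\rho(E_w) = w_0 > 0$, this face is irrelevant for $\rho$-persistence.

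Next I would verify the three hypotheses at $E_0$: \emph{isolation} follows from hyperbolicity and instability of $E_0$, so $E_0$ has a neighborhood containing no other invariant set; \emph{acyclicity} is automatic as the covering is a singleton; and \emph{weak $\rho$-repelling} reduces to the statement that no orbit with $\rho(x_0) > 0$ has $E_0 \in \omega(x_0)$. For $x_0 \in \mathring\Omega$ this is exactly Lemma \ref{lem2 E_0}; for boundary initial data with $\rho(x_0) > 0$ the sign analysis above shows the orbit enters $\mathring\Omega$ in arbitrarily short positive time, and Lemma \ref{lem2 E_0} applies to the shifted orbit. Since $\Omega$ is compact and forward invariant, it is the global compact attractor of the rescaled semiflow, so all hypotheses of the abstract theorem are met, and uniform strong $\rho$-persistence follows.

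The main obstacle I anticipate is the careful verification of the weak-repelling hypothesis uniformly across the (non-open) set $\{\rho > 0\} \cap \Omega$, rather than just on $\mathring\Omega$; the face $\{w = w_0\}$ needs to be singled out and removed from consideration as above. If one prefers to bypass the abstract machinery, an elementary route is: assume for contradiction that $\liminf_{t\to\infty} (u(t)+w(t)) = 0$ for some $x_0 \in \mathring\Omega$; by boundedness extract $t_n \to \infty$ with $\Phi(t_n,x_0) \to (S^*,0,0) \in X_0$, and use the fact that $\omega(x_0)$ carries a full bi-infinite trajectory through $(S^*,0,0)$, whose backward extension under $\dot S = 1 - S$ escapes $\bar\Omega$ unless $S^* = 1$, forcing $E_0 \in \omega(x_0)$ and contradicting Lemma \ref{lem2 E_0}. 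Uniformity would then be recovered by a standard compactness contradiction combined with the uniform $\delta > 0$ from Lemma \ref{persistence_of_(u+w)}.
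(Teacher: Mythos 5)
Your argument is sound in substance, but it runs through different machinery than the paper. The paper's proof also invokes Theorem 4.13 of \cite{Smith2016DynamicalSA}, but that theorem is the bare ``uniform weak implies uniform strong'' result: one only verifies the conditions $\hat{\heartsuit}_0$--$\hat{\heartsuit}_3$ (continuity of $\rho$ and $\sigma$, an attracting set $B=X=\Omega$, compactness of the slices $\{\epsilon_1\leq\rho\leq\epsilon_2\}$, and the fact that $u+w=0$ forces $u\equiv w\equiv 0$, so $\rho$ cannot recover after vanishing) and feeds in Lemma \ref{persistence_of_(u+w)}. No extinction-set dynamics, no isolation, no acyclicity, and no appeal to Lemma \ref{lem2 E_0} are needed. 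What you describe instead is the Hale--Waltman/Butler--Waltman acyclicity theorem (Chapter 8 of the same reference): identify $X_0=\{u=w=0\}$, show all its dynamics collapse to the isolated hyperbolic point $E_0$, and verify weak repelling via Lemma \ref{lem2 E_0} and the boundary sign analysis. That route is also valid here --- your checks of the faces of $\partial\Omega$ and the special role of $\{w=w_0\}$ are correct --- and it has the minor advantage that it does not actually require the uniform weak persistence lemma as input (acyclicity plus isolation plus weak repelling already yields uniform persistence), but it is heavier machinery, and you should not cite Theorem 4.13 for it, since that theorem contains no acyclicity hypothesis; as written, your citation and your stated hypotheses do not match. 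Your fallback elementary argument (backward invariance of $\{u=w=0\}$ forcing $S^*=1$, then Lemma \ref{lem2 E_0}) essentially reproves the weak-repelling property pointwise and, as you note, still leaves the uniformity to a ``standard'' compactness argument, which is precisely the content the abstract theorem is there to supply.
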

\begin{proof}
We apply the Theorem 4.13 in \cite{Smith2016DynamicalSA} to prove this theorem. There are four conditions to verify: 

\begin{enumerate}
\item[$\hat{\heartsuit}_0$] $X$ is a metric space, $\rho$ is uniformly continuous, $\sigma = \rho \circ \phi: \mathbb{R}^+ \times X \to \mathbb{R}^+$ is continuous. 
    \item[$\hat{\heartsuit}_1$] There exists a nonempty $B \subseteq X$, such that $\forall x \in X$, $\rho(x) > 0 \implies d(\phi_t(x), B) \rightarrow 0.$
    \item[$\hat{\heartsuit}_2$] If $0 < \epsilon_1 < \epsilon_2 < \infty$, then $B \bigcap \left\{\epsilon_1 \leq  \rho(x) \leq \epsilon_2\right\}$ is compact.
    \item[$\hat{\heartsuit}_3$] There are no $y \in B \subset X$ ($B$ is non-empty) and  $s,t \in \mathbb{R}^+$ such that $\rho(y) > 0, \sigma(t,y) = 0$ and $\sigma(t+s,y) > 0.$
\end{enumerate}

Recall $\displaystyle \Omega = \left\{(S,u,w)|\text{ }0 \leq S \leq 1, 0 \leq u \leq 1 +  \frac{\beta w_0}{4}, 0 \leq w \leq w_0 \right\}$, and define $X$ and $B$ so that $X =B= \Omega$. By Lemma \ref{persistence_of_(u+w)}, there exists $\eta > 0$ such that $\limsup_{t \rightarrow \infty}\rho(\phi_t(x_0)) > \eta$, that is, $\phi$ is uniformly weakly $\rho-$persistent. 

To complete the proof, it suffices to verify the conditions ($\hat{\heartsuit}$). First of all, $\hat{\heartsuit}_0$ is satisfied by the definition of $X, \rho$ and $\sigma$. By the well-posedness and boundedness of the rescaled system (\ref{rescaled.model}), $X=B$ is forward invariant, hence $\hat{\heartsuit}_1$ is trivially satisfied.

For $\hat{\heartsuit}_2$, we note that the set $B \bigcap\left\{\epsilon_1 \leq  \rho(x) \leq \epsilon_2\right\}$ is a closed subset of a bounded set $B \subset \mathbb{R}^3$,
thus  $B \bigcap \left\{\epsilon_1 \leq  \rho(x) \leq \epsilon_2\right\}$  is compact.

The condition $\hat{\heartsuit}_3$  is satisfied since $u+w=0$ implies that $u=w=0$, which implies that $u(t)=w(t)=0$ for all $t$. 

Therefore, the system (\ref{rescaled.model}) is uniformly strongly $\rho-$persistent by Theorem 4.13 in \cite{Smith2016DynamicalSA}.
\end{proof}

Theorem \ref{rho-strong-persistent} implies that if  $E_0$ is unstable, then there exists $\gamma > 0$ such that for any initial value 
$x_0 \in \mathring{\Omega}$, $\liminf_{t \rightarrow \infty}(u + w) \geq \gamma$.

\subsection{Persistence against clogging}

In this subsection, we will study the $r$-persistence of the rescaled system (\ref{rescaled.model}) for $r = w_0 - w$. This type of persistence indicates that
the solutions of (\ref{rescaled.model}) stay away from the boundary set $\{w=w_0\}$ which corresponds to a clogged state of the chemostat. 
In other words, when (\ref{rescaled.model}) is $r$-persistent, the chemostat never gets clogged. 

For mathematical convenience, we express (\ref{rescaled.model}) as an equivalent {\bf modified system} (\ref{modified.model}):
\begin{equation}
\label{modified.model}
     \left\{
\begin{aligned}
    \dot{S} & = 1 - S - r\left( \frac{u}{w_0} f(S) + (1 - \frac{r}{w_0}) g(S) \right), \\
    \dot{u} & = -u + r \left( (f(S) - \alpha ) \frac{u}{w_0} +\beta (1 - \frac{r}{w_0})\right), \\
    \dot{r} & = -r \left( \alpha \frac{u}{w_0} + (g(S) - \beta)(1 - \frac{r}{w_0}) \right).
\end{aligned}
    \right.
\end{equation}  
We consider solutions in the set $\Omega_r = \left\{(S,u,r) |0\leq S \leq 1, 0\leq u \leq \frac{ 4+ \beta w_0}{4}, 0\leq r \leq w_0\right\}$.

From the analysis of the clogged state, if $E_r = (1,0,0)$ is LAS in the modified system (\ref{modified.model}) is equivalent that $E_w$ is LAS in the rescaled system (\ref{rescaled.model}). We want to show $\liminf_{t \rightarrow \infty}r(t) \geq$, which is similar to the proof of the theorem \ref{rho-strong-persistent} and theorem \ref{persistence_of_(u+w)}. We prove it by the theorem 4.13 in \cite{Smith2016DynamicalSA} and construct the following function
    \begin{equation*}
        \tau (S,u,r) = r,~ \sigma = \tau \circ \varphi
    \end{equation*}
    where, $\varphi:\mathbb{R} \times \mathbb{R}^3 \rightarrow \mathbb{R}^3$, which is the flow induced by (\ref{modified.model}). $\varphi(t,x_0)$ is the solution of (\ref{modified.model}) with the initial condition $x_0 = (S(0),u(0),r(0))$.

\begin{lem}
\label{weak-rho-perst-for-r}
    If $E_0$ and $E_w$ are unstable in the rescaled system (\ref{rescaled.model}), then $E_r$ is unstable in the modified system (\ref{modified.model}) and the system (\ref{modified.model}) is uniformly weakly $r$-persistent.
\end{lem}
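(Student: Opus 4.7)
The instability of $E_r$ in (\ref{modified.model}) is essentially automatic: the map $(S,u,w)\mapsto (S,u,w_0-w)$ is a linear diffeomorphism carrying the rescaled system (\ref{rescaled.model}) to the modified system (\ref{modified.model}) and sending $E_w$ to $E_r$. Its constant Jacobian $\mathrm{diag}(1,1,-1)$ is nonsingular, hence conjugates the two linearizations and preserves their spectra. One may also verify this by direct computation: linearizing (\ref{modified.model}) at $E_r=(1,0,0)$ yields eigenvalues $-1$, $-1$, and $\nu:=\beta-g(1)$, and the standing assumption that $E_w$ is unstable is exactly $\beta>g(1)$, so $\nu>0$ gives an unstable direction along the $r$-axis.

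For the uniform weak $r$-persistence, the plan is to mimic the contradiction argument of Lemma \ref{persistence_of_(u+w)}. Suppose no $\delta>0$ works; then for every small $\delta>0$ there exists an interior solution $(S(t),u(t),r(t))$ of (\ref{modified.model}) with $r(0)>0$ but $\limsup_{t\to\infty}r(t)<\delta$. Since $\Omega_r$ is bounded and forward-invariant, the bracketed factors in the $S$- and $u$-equations of (\ref{modified.model}) are uniformly bounded by some constant $C$. Thus for $t$ sufficiently large we get
\begin{equation*}
\dot S \;\geq\; 1-S-C\delta, \qquad \dot u \;\leq\; -u + C\delta,
\end{equation*}
which force $\liminf_{t\to\infty} S(t)\geq 1-C\delta$ and $\limsup_{t\to\infty} u(t)\leq C\delta$. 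Consequently, for any prescribed $\epsilon>0$, choosing $\delta$ small enough guarantees $S(t)>1-\epsilon$, $u(t)<\epsilon$, and $r(t)<\epsilon$ for all large $t$.

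Now I use $\nu>0$ to close the argument. Since $g$ is increasing and $S(t)\leq 1$, we have $g(S(t))\leq g(1)=\beta-\nu$, and
\begin{equation*}
\alpha\,\tfrac{u}{w_0} + \bigl(g(S)-\beta\bigr)\bigl(1-\tfrac{r}{w_0}\bigr) \;\leq\; \tfrac{\alpha\epsilon}{w_0} - \nu + \tfrac{\nu\epsilon}{w_0} \;=\; -\nu + O(\epsilon),
\end{equation*}
so by taking $\epsilon$ (and hence $\delta$) sufficiently small this expression lies below $-\nu/2$ for all large $t$. The $r$-equation in (\ref{modified.model}) then gives $\dot r \geq (\nu/2)\,r$ for all large $t$, and since $r$ remains positive on the orbit (the plane $\{r=0\}$ is invariant), $r(t)$ would grow exponentially without bound. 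This contradicts $r(t)\leq w_0$, and a fortiori the hypothesis $\limsup_{t\to\infty} r(t)<\delta$. Hence (\ref{modified.model}) is uniformly weakly $r$-persistent.

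The only genuinely delicate point is the simultaneous smallness: $\delta$ must be chosen small enough that the $S$- and $u$-estimates place the solution inside a neighborhood of $E_r$ where the linear term $-\nu$ dominates the $O(\epsilon)$ perturbations in the $r$-equation. Once this is arranged uniformly in the orbit, converting the spectral instability of $E_r$ into exponential growth of $r$ is routine; the instability hypothesis on $E_0$ is not invoked in this argument but is kept in the statement for consistency with the broader persistence framework in the paper.
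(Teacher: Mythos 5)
Your argument is correct, and at its core it is the same contradiction argument the paper uses: assume $\limsup_{t\to\infty} r(t)$ stays below a small threshold, use the $u$-equation of (\ref{modified.model}) to force $u$ eventually small, and then note that the $r$-equation yields $\dot r \geq \tfrac12(\beta-g(1))\,r$, so $r$ grows exponentially, contradicting $r\leq w_0$. The differences are minor but worth recording. The paper never touches the $S$-equation (it simply bounds $g(S)\leq g(1)$), so your intermediate estimate $S(t)>1-\epsilon$ is harmless but superfluous. More substantively, the paper invokes the instability of $E_0$ to assert $f(1)-\alpha>1$ and hence positivity of its explicit constant $M=(f(1)-\alpha)\tfrac{4+\beta w_0}{4w_0}+\beta$, which enters its formula for the threshold $\eta$; you instead take a crude positive upper bound $C$ for the bracket in the $u$-equation and correctly observe that the instability of $E_0$ is never actually needed. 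Your route is the more robust one: when $g(1)-\beta<0$, instability of $E_0$ only forces $AB<\alpha\beta$ rather than $A=f(1)-1-\alpha>0$, so the paper's deduction is not airtight, whereas your bound $C$ is positive unconditionally. The one point that must remain explicit (and you address it in your closing paragraph) is that $C$, $\epsilon$, and hence $\delta$ depend only on the parameters and the a priori bound on $u$ coming from the dissipativity lemma, not on the particular orbit; that uniformity is exactly what upgrades the spectral instability of $E_r$ to \emph{uniform} weak $r$-persistence. Your treatment of the instability of $E_r$ via the conjugacy $w\mapsto w_0-w$ is consistent with the paper's remark preceding the lemma.
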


\begin{proof}
For this theorem, we want to show that there exists $\eta > 0$ such that for any initial value $x_0 \in \mathring{\Omega}_r$, $\displaystyle \limsup_{t\rightarrow\infty}r(t) \geq \eta$.

We start by letting
$$ M:=  (f(1) - \alpha)\displaystyle\frac{ 4+ \beta w_0}{4w_0} + \beta, \quad \eta:=\frac{(\beta - g(1))w_0}{ 2(\beta - g(1) +2\alpha M )}.$$
Since we are assuming that both $E_0$ and $E_w$ are unstable in (\ref{rescaled.model}), we must have that 
$\beta-g(1)>0$ and $f(1)-\alpha>1$. Therefore, $M>0$ and $\eta>0$. 

For sake of contradiction, suppose that there exists $x_0 \in \mathring{\Omega}_r$ such that $\displaystyle \limsup_{t\rightarrow\infty}r(t) < \eta$. Then there exists $T_1 \geq 0$
such that $r(t)<\eta$ for all $t>T_1$, and consequently, from the $u$-equation in (\ref{modified.model}), we have that 
    \begin{equation*}
    \begin{aligned}
        \dot{u}
        & = - u + r \left( (f(S) - \alpha ) \frac{u}{w_0} +\beta (1 - \frac{r}{w_0})\right) \\
        & \leq -  u + r \left( (f(1) - \alpha)\frac{ 4+ \beta w_0}{4w_0} + \beta \right) \\
        & \leq - u + \eta M, \quad t>T_1.
    \end{aligned}
    \end{equation*}
Therefore, there exists $T_2 \geq T_1$ such that $u(t)<2 \eta M$ for all $t>T_2$. Therefore, from the $r$-equation in (\ref{modified.model}), we have that
\begin{align*}
    \dot{r} & = r \left( -\alpha \frac{u}{w_0} + (\beta-g(S))(1 - \frac{r}{w_0}) \right) \\
            & \geq r \left( - \alpha  \frac{2\eta M}{w_0} +  (\beta - g(1))  ( 1 - \frac{\eta}{w_0} ) \right) \\
            & = r \left( -\frac{\eta}{w_0}(2 \alpha M+ \beta - g(1)) +  \beta - g(1)  \right) \\
            & =  r\frac{\beta - g(1)}{2}, \quad t>T_2,
\end{align*}
which implies that $r(t) \rightarrow \infty$ as $t \rightarrow \infty$, a contradiction. Therefore, (\ref{modified.model}) is uniformly weakly $r$-persistent. \end{proof}

Next, we utilize Theorem 4.13 in \cite{Smith2016DynamicalSA} again to deduce the uniform strong $r$-persistence of (\ref{modified.model}) from the uniform weak $r$-persistence.
\begin{thm}
\label{rho-persistence-of-r}
    If $E_0$ and $E_w$ are unstable under (\ref{rescaled.model}), then (\ref{modified.model}) is uniformly strongly $r$-persistent. 
 \end{thm}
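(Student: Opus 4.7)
The plan is to mirror the proof of Theorem \ref{rho-strong-persistent} by invoking Theorem 4.13 in \cite{Smith2016DynamicalSA}, which upgrades uniform weak $\rho$-persistence to uniform strong $\rho$-persistence provided the four conditions $\hat{\heartsuit}_0$--$\hat{\heartsuit}_3$ hold. Lemma \ref{weak-rho-perst-for-r} already supplies the uniform weak $r$-persistence that serves as the input to this machinery, so the work reduces to verifying those conditions with $\tau(S,u,r)=r$ and $\sigma=\tau\circ \varphi$, where $\varphi$ is the semiflow on $\Omega_r$ generated by (\ref{modified.model}).

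First I would set $X=B=\Omega_r$. Condition $\hat{\heartsuit}_0$ is immediate: $\Omega_r$ is a compact subset of $\mathbb{R}^3$, the projection $\tau$ is Lipschitz (hence uniformly continuous), and $\sigma$ is continuous because $\varphi$ depends continuously on $(t,x_0)$ by well-posedness of (\ref{modified.model}). Condition $\hat{\heartsuit}_1$ is trivial since $\Omega_r$ is forward invariant under $\varphi$, so for every $x\in X$ we have $\varphi_t(x)\in B$ for all $t\geq 0$, giving $d(\varphi_t(x),B)=0$. Condition $\hat{\heartsuit}_2$ follows because $B\cap\{\epsilon_1\leq r\leq\epsilon_2\}$ is a closed subset of the compact set $\Omega_r$ and is therefore compact.

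The crux is $\hat{\heartsuit}_3$, and here the structure of the $r$-equation does the work. Rewriting
\[
\dot r=-r\left(\alpha\frac{u}{w_0}+(g(S)-\beta)\Bigl(1-\frac{r}{w_0}\Bigr)\right)
\]
we see that $\{r=0\}$ is invariant and, by uniqueness of solutions of (\ref{modified.model}), any trajectory with $r(0)>0$ satisfies $r(t)>0$ for all $t\geq 0$. Consequently, for any $y\in B$ with $\tau(y)=r(0)>0$ one has $\sigma(t,y)=r(t)>0$ for every $t\geq 0$, so there cannot exist $t,s\geq 0$ with $\sigma(t,y)=0$ and $\sigma(t+s,y)>0$. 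Hence $\hat{\heartsuit}_3$ is vacuously satisfied.

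Once $\hat{\heartsuit}_0$--$\hat{\heartsuit}_3$ are all in place, Theorem 4.13 of \cite{Smith2016DynamicalSA} together with Lemma \ref{weak-rho-perst-for-r} yields the uniform strong $r$-persistence of (\ref{modified.model}). I do not anticipate any serious obstacle: the only potentially delicate step is $\hat{\heartsuit}_3$, and the invariance of $\{r=0\}$ makes it automatic. Translating back to the rescaled system (\ref{rescaled.model}), this means there exists $\gamma>0$ such that $\liminf_{t\to\infty}(w_0-w(t))\geq\gamma$ for every initial condition in $\mathring{\Omega}$, so the chemostat never clogs up under the hypotheses that both $E_0$ and $E_w$ are unstable.
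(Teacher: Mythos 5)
Your proposal is correct and follows essentially the same route as the paper: both invoke Theorem 4.13 of \cite{Smith2016DynamicalSA} with $X=B=\Omega_r$, feed in the uniform weak $r$-persistence from Lemma \ref{weak-rho-perst-for-r}, and verify $\hat{\heartsuit}_0$--$\hat{\heartsuit}_3$ using compactness of $\Omega_r$ and the invariance of $\{r=0\}$. Your treatment of $\hat{\heartsuit}_3$ is slightly more explicit than the paper's, but there is no substantive difference.
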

\begin{proof} The sets $B = X = \Omega_r$, $\Omega_r$ are compact. From the theorem \ref{weak-rho-perst-for-r}, we know that the modified rescaled system (\ref{modified.model}) is uniformly weakly $r$-persistent. Since $B=X$, we have that $d(\varphi_t(x),B) = 0$ for all $x \in X, r(x) >0$. In addition, $B \bigcap \left\{0< \epsilon_1 \leq \tau(x) \leq \epsilon_2 <\infty \right\}$ is closed and bounded in $\mathbb{R}^3$, hence it is compact. Finally, the set $\left\{(S,u,r)| r = 0\right\}$ is invariant under the modified system (\ref{modified.model}). Hence, all four conditions in ($\hat{\heartsuit}$) are verified. Using Theorem 4.13 in \cite{Smith2016DynamicalSA}, we conclude that (\ref{modified.model}) is uniformly strongly $r$-persistent. 
\end{proof}

\begin{thm} \label{unif_persist}
     If $E_0$ and $E_w$ of the system (\ref{rescaled.model}) are both unstable, then the system (\ref{rescaled.model}) is uniformly persistent. 
\end{thm}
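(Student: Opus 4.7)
The plan is to combine the two persistence statements already proved in this section with a direct lower-bound analysis of $S$. Under the hypotheses, Theorem~\ref{rho-strong-persistent} yields a uniform $\gamma>0$ with $\liminf_{t\to\infty}(u(t)+w(t))\geq \gamma$ on $\mathring\Omega$, while Theorem~\ref{rho-persistence-of-r}, transferred back to the rescaled system via $r=w_0-w$, yields a uniform $\delta>0$ with $\limsup_{t\to\infty} w(t)\leq w_0-\delta$. What remains is to produce uniform positive lower bounds on $S$, $u$, and $w$ separately; the first follows directly from the $S$-equation, and the other two come from the cross-coupling terms $\beta w$ in $\dot u$ and $\alpha u$ in $\dot w$ together with the constraint $u+w\geq\gamma$.

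For $S$, using the crude bounds $u\leq 1+\beta w_0/4$ and $w\leq w_0$ valid on $\Omega$ yields
$$\dot S \;\geq\; 1-S-\bigl(1+\tfrac{\beta w_0}{4}\bigr)f(S)-w_0\,g(S),$$
whose right-hand side equals $1$ at $S=0$. By continuity, there exists $\epsilon_S>0$ such that $\dot S\geq \tfrac12$ on $\{S\leq\epsilon_S\}$, so the set $\{S\geq\epsilon_S\}$ is forward invariant and absorbing, giving $\liminf_{t\to\infty} S(t)\geq\epsilon_S$ uniformly in the initial data.

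For $u$, the inequality $(1-w/w_0)(f(S)-\alpha)u\geq -\alpha u$ on $\Omega$ produces
$$\dot u \;\geq\; -(1+\alpha)u+\bigl(1-\tfrac{w}{w_0}\bigr)\beta w.$$
Choose $u^*>0$ with $u^*\leq \gamma/4$ and $(1+\alpha)u^*<\beta\delta\gamma/(8 w_0)$. For $t$ large enough that $u+w\geq\gamma/2$ and $w\leq w_0-\delta/2$ both hold, any time at which $u\leq u^*$ forces $w\geq\gamma/4$ and $1-w/w_0\geq\delta/(2w_0)$, so the displayed inequality makes $\dot u$ strictly positive. Hence the barrier $\{u=u^*\}$ cannot be crossed from above, and $\liminf_{t\to\infty} u(t)\geq u^*$. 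An entirely symmetric argument applied to the $w$-equation, with the cross-term $\alpha u$ playing the role of $\beta w$, yields a uniform $w^*>0$ with $\liminf_{t\to\infty} w(t)\geq w^*$. Setting $\epsilon:=\min(\epsilon_S,u^*,w^*)$ completes the proof.

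The only delicate point is choosing $u^*$ (and analogously $w^*$) strictly below the threshold $\beta\delta\gamma/(8(1+\alpha)w_0)$ so that $\dot u>0$ is \emph{strict} at $u=u^*$ and the barrier is genuinely impassable; beyond this, the argument is bookkeeping, as the constants $\gamma,\delta,\epsilon_S,u^*,w^*$ inherit their uniformity in the initial data from the two preceding persistence theorems and from the autonomous estimate for $S$.
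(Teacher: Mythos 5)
Your proposal is correct and follows essentially the same route as the paper: both invoke Theorem~\ref{rho-strong-persistent} for a uniform lower bound on $u+w$ and Theorem~\ref{rho-persistence-of-r} for a uniform gap $w\le w_0-\delta$, then extract individual lower bounds on $u$, $w$, and $S$ from differential inequalities using the cross-coupling terms $\beta w$ and $\alpha u$. The only difference is cosmetic (your barrier argument at $u=u^*$ versus the paper's explicit linear comparison $\dot u\ge \frac{\eta}{w_0}(-(w_0/\eta+\alpha+\beta)u+\beta\gamma)$, and your small-$S$ neighborhood estimate versus the paper's root $\hat S$ of $1-S=(f(S)+g(S))M$).
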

\begin{proof} By Theorem \ref{rho-strong-persistent}, 
there exists $\gamma > 0, T_3>0$ such that $u(t) + w(t) > \gamma$ for all $t\geq T_3$. By Theorem \ref{rho-persistence-of-r}, 
there exists $\eta>0, T_4>0$ such that $w(t) < w_{0} - \eta$ for all $t\geq T_4$.

Then, for $t \geq T:=\max(T_3,T_4)$ we have that 
\begin{align*}
    \dot{u} & = -u + (1 - \frac{w}{w_{0}}) [(f(S) - \alpha)u + \beta w] \\
            & \geq  -u +  \frac{\eta}{w_{0}} ( - \alpha u + \beta (\gamma -u)) \\
            & \geq \frac{\eta}{w_{0}} \left( -(w_{0}/\eta+ \alpha + \beta )u(t) + \beta \gamma \right),
\end{align*}
which implies that $\displaystyle \liminf_{t\rightarrow \infty}{u(t)} \geq \frac{\beta\gamma}{w_{0}/\eta+ \alpha + \beta} > 0$. Similarly, 
$$
    \dot{w} = (1 - \frac{w}{w_{0}}) [\alpha u +(g(S)- \beta) w]\geq   \frac{\eta}{w_{0}} [\alpha (\gamma-w) -\beta w] = \frac{\eta}{w_{0}} [\alpha \gamma -(\alpha+\beta)w] $$  
 implies that $\displaystyle \liminf_{t\rightarrow \infty}{w(t)} \geq \frac{\alpha\gamma}{\alpha + \beta} > 0$. 
    
Finally, since $u(t)+w(t) \leq M:=1 + \beta w_0/4 + w_0,$ we have that
$$\dot{S} \geq 1-S -(f(S)+g(S))M,$$
hence $\displaystyle \liminf_{t\rightarrow\infty}{S(t)} \geq \hat{S}$, where $\hat{S} \in (0,1)$ is the unique root of the equation
\begin{equation}
\label{root_of_S.hat}
    1-S  = (f(S)+g(S))M.
\end{equation}
Therefore, the rescaled system (\ref{rescaled.model}) is uniformly persistent. 
\end{proof}

Under the assumptions of Theorem \ref{unif_persist}, it follows readily that the original system (\ref{dilution.model}) is also uniformly persistent. 

\section{Positive equilibria}\label{sec6}

In this section, we analyze positive equilibria and their stability. The existence of at least one positive equilibrium is guaranteed when the rescaled system (\ref{rescaled.model})
is uniformly persistent (for instance, see  Theorem 1.3.7 in \cite{MR1980821}). Although the uniform persistence is sufficient, it is not necessary for the existence 
a positive equilibrium as we will demonstrate numerically in this section. We first characterize the local stability of positive equilibria. Due to Remark \ref{remark},
we may consider the original system (\ref{dilution.model}).

The coordinates of any positive equilibrium $E_c=(S,u,w)$ satisfy
\begin{eqnarray*}
0 & = & D(w)(1- S )- u f(S) -w g(S),\\
0 & = & (f(S)-D(w) -\alpha)u +\beta w,\\
0 & = & \alpha u  +(g(S)-\beta) w.
\end{eqnarray*}
with $D(w)=\dfrac{w_0}{w_0-w}.$ 
By adding the equations, we get that $u=1-S$. From the third equation, we get that 
$$ w=\frac{\alpha u}{\beta - g(S)}=\frac{\alpha (1-S)}{\beta - g(S)}=:w(S).$$
From the second equation, we get that
$$ 0= f(S)-D(w(S)) -\alpha  +\beta \frac{w}{u} =f(S)-D(w(S)) -\alpha  +\frac{\alpha \beta }{\beta - g(S)}.$$
Multiplying by $\beta-g(S)>0$ on both sides, we find that the positive equilibria correspond to the roots of the function
$$ F(S):=(\beta - g(S))[f(S)-D(w(S))-\alpha]+\alpha \beta =(\beta - g(S))[f(S)-D(w(S))]+\alpha g(S),$$
such that $0<S<1$, $g(S)<\beta$ and $0<w(S)<w_0$.
Observe that
$$ w'(S)=-\frac{\alpha}{\beta-g(S)}+\frac{\alpha g'(S)(1-S)}{(\beta-g(S))^2}=\frac{g'(S)w-\alpha}{\beta-g(S)}.$$
Consequently,
$$ F'=-g'(f-D)+(\beta-g)(f'-D' w')+\alpha g'=-D'(w g'-\alpha)-f'(g-\beta)-g'(f-D-\alpha),$$
where we suppressed the arguments for notational convenience.

The Jacobian of (\ref{dilution.model}) at a positive equilibrium $E_c$ has the form
$$ J(E_c) =  \left[ \begin{array}{ccc} -(D+ u f'+wg') & -f & -g+D' u \\
u f' & f-D-\alpha  & \beta - D'u\\
w g' & \alpha & g-\beta  \end{array} \right].$$
We expand the determinant of $J(E)$ along the top row 
$$ \det J(E)= -(D+ u f'+w g')[(f-D-\alpha)(g-\beta) - \alpha(\beta - D' u)]$$
$$ + f[u f'(g-\beta) - w g'(\beta - D' u)]+(D'u -g)[u f'\alpha - w g'(f-D-\alpha)].$$
Using the equilibrium relation $(f-D-\alpha)(g-\beta) = \alpha \beta$
and regrouping terms, we find that 
$$ \det J(E)= D'[-(D+ u f'+w g') \alpha  u+f w g' u +u(u f'\alpha - g' w(f-D-\alpha)]$$
$$ + f'[u f(g-\beta) -g \alpha u]+ g'[-f \beta w + w g (f-D-\alpha)].$$
After cancelling the terms and substituting
$$-\alpha u = w (g-\beta), \quad - \beta w = u (f-D-\alpha), \quad Du=u f+w g,$$
we get that
$$ \det J(E_c)= D'(-\alpha D  u+ g' w D u) + f'(g-\beta)(uf+w g)+ g'(f-D-\alpha)](u f+ w g),$$
and, finally, that
$$ \det J(E_c)= Du [D'(w g'-\alpha)+f'(g-\beta)+g'(f-D-\alpha)]=-Du F'(S).$$
Since $Du>0$, it follows that $$ \sign (\det J(E_c)) =-\sign (F'(S)).$$
Hence, any positive equilibrium $E_c$ with $F'(S)<0$ is automatically unstable.

Furthermore, it turns out that any positive equilibrium $E_c$ with $F'(S)>0$ is LES. Indeed,
suppose that $F'(S)>0$ (equivalently, $\det J(E_c)<0$).
The characteristic polynomial of $J(E_c)$ has the form
\begin{eqnarray*} p(z) & = &  z^3+a_1 z^2 +a_2 z+a_3,\\
a_1 & = & \Delta_1+\Delta_2+\Delta_3 >0,\\
a_2 & = &\Delta_1 (\Delta_2+\Delta_3)+u f f'+w g g' +D'u(\alpha - w g'),\\
a_3 & = & Du [D'(\alpha - w g')+f' \Delta_3 + g'\Delta_2] >0,
\end{eqnarray*}
where 
$$\Delta_1=D+uf'+wg'>D, \quad \Delta_2=D+\alpha-f>0, \quad \Delta_3=\beta-g>0.$$
It follows that
$$ a_2>u f'(\Delta_2+\Delta_3+f)+w g' (\Delta_2+\Delta_3+g)+ D'u(\alpha - w g')$$
$$ >u f' \Delta_3+ w g' (\Delta_3+g) +D'u(\alpha - w g') $$
$$ = u f' \Delta_3+ \beta w g'  +D'u(\alpha - w g') = u f' \Delta_3+ u g' \Delta_2 +D'u(\alpha - w g').$$
since $ \Delta_2+f>\Delta_2>0$ and $(\Delta_3+g)w=\beta w = \Delta_2 u$.
Consequently, we have that
$$ a_2 > u [f' \Delta_3+  g' \Delta_2 +D'(\alpha - w g')] = \frac{a_3}{D}>0,$$
and since $a_1>\Delta_1>D$, we finally conclude that
$ a_1 a_2 > D \dfrac{a_3}{D}=a_3. $
Consequently, $a_3>0$ implies that $a_1a_2>a_3,$ which in turn implies that $a_2>0$, hence $p(z)$ is Hurwitz and $E_c$ is LES. 

The following Theorem summarizes the local stability properties of $E_c$. 
\begin{thm}
    \label{stablelas}
    A positive equilibrium $E_c=(S,u,w)$ is locally exponentially stable if $F'(S) >0,$ and unstable if $F'(S) < 0$.
\end{thm}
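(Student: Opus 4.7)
The approach is to analyze the characteristic polynomial of the Jacobian $J(E_c)$ and apply the Routh--Hurwitz criterion. By Remark \ref{remark}, I may work with the original system (\ref{dilution.model}), whose Jacobian at a positive equilibrium has the $3\times 3$ form displayed in the excerpt. The plan hinges on systematically using the three equilibrium identities $u = 1-S$, $\alpha u = -(g(S)-\beta)w$, and $\beta w = -(f(S)-D(w)-\alpha)u$, together with $Du = uf + wg$ and the product relation $(f-D-\alpha)(g-\beta) = \alpha\beta$, to collapse the algebraic expressions into recognizable quantities.

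The pivotal identity to establish is $\det J(E_c) = -Du\,F'(S)$. I would derive it by expanding the determinant along the top row and using the substitutions above to replace every occurrence of $\alpha u$, $\beta w$, and $Du$. After the dust settles, the surviving factor matches the expression for $F'(S) = -D'(wg'-\alpha) - f'(g-\beta) - g'(f-D-\alpha)$ computed from differentiating $F(S) = (\beta-g(S))[f(S)-D(w(S))] + \alpha g(S)$ through the relation $w = w(S)$. Once this identity is in hand, the sign of $\det J(E_c)$ is opposite to the sign of $F'(S)$, which drives both halves of the dichotomy.

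For the instability half, if $F'(S) < 0$ then $\det J(E_c) > 0$, so the characteristic polynomial $p(z) = z^3 + a_1 z^2 + a_2 z + a_3$ has constant term $a_3 = -\det J(E_c) < 0$. Since $p(0) < 0$ and $p(z) \to +\infty$ as $z \to +\infty$, $p$ has a real positive root, so $E_c$ is unstable.

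For the stability half, if $F'(S) > 0$ I would verify the Routh--Hurwitz conditions $a_1 > 0$, $a_3 > 0$, and $a_1 a_2 > a_3$. Setting $\Delta_1 = D + uf' + wg'$, $\Delta_2 = D + \alpha - f$, and $\Delta_3 = \beta - g$, the fact that $w > 0$ forces $g(S) < \beta$, so $\Delta_3 > 0$, and then $(f-D-\alpha)(g-\beta) = \alpha\beta > 0$ forces $\Delta_2 > 0$. Hence $a_1 = \Delta_1 + \Delta_2 + \Delta_3 > D$ and $a_3 = Du\,F'(S) > 0$. The main obstacle is the bound $a_2 > a_3/D$: this is where the identities $(\Delta_3 + g)w = \beta w = \Delta_2 u$ must be used to pair up the $uf'$- and $wg'$-terms in $a_2$ with the $f'\Delta_3$- and $g'\Delta_2$-terms in $a_3/D$. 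Once $a_2 > a_3/D$ is secured, combining with $a_1 > D$ immediately gives $a_1 a_2 > a_3$, completing the Routh--Hurwitz check and establishing local exponential stability of $E_c$.
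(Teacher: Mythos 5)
Your proposal follows essentially the same route as the paper: establish the identity $\det J(E_c)=-Du\,F'(S)$ via the equilibrium relations, conclude instability from $\det J(E_c)>0$ when $F'(S)<0$, and for $F'(S)>0$ verify the Routh--Hurwitz conditions by showing $a_2>a_3/D$ and $a_1>D$ using $(\Delta_3+g)w=\beta w=\Delta_2 u$. The argument is correct and matches the paper's proof in all essential steps.
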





\begin{thm}\label{uniqueE} If $E_0$ is unstable and $\beta>\frac{m_2}{a_2}$, then there exists a unique positive equilibrium $E_c$, which is LES.
\end{thm}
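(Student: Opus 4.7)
The plan is to reduce the positive equilibria to roots of the scalar function $F(S)$ introduced just before Theorem~\ref{stablelas}, establish that $F$ must change sign in its natural domain, and then verify through Theorem~\ref{stablelas} that any such root automatically produces an LES equilibrium.  The condition $\beta>m_g/a_g=g'(0)$ is the workhorse that makes the reduction nondegenerate, and hyperbolic instability of $E_0$ is what forces $F$ to have the correct sign at $S=1$.

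First, I would establish the geometry of the reduction.  Since $g$ is concave Monod-type with $g(0)=0$, we have $g(S)\leq g'(0)S<\beta$ for all $S\in[0,1]$, so $\beta-g(S)>0$ throughout $[0,1]$, and $w(S)=\alpha(1-S)/(\beta-g(S))$ is well-defined, nonnegative, and satisfies $w(1)=0$.  A short computation shows $w'(S)$ has the sign of $h(S):=(1-S)g'(S)+g(S)-\beta$; since $h'(S)=(1-S)g''(S)\leq0$ and $h(0)=g'(0)-\beta<0$, one has $w'(S)<0$ on $[0,1]$.  Define $S_{\dagger}=0$ if $w(0)=\alpha/\beta<w_0$, and otherwise let $S_{\dagger}\in(0,1)$ be the unique solution of $w(S_{\dagger})=w_0$.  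Then $F$ is continuous on $(S_{\dagger},1]$, and its roots there are in bijection with the positive equilibria of (\ref{dilution.model}).

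Next, I would determine the boundary values.  Using $f(1)-1=A+\alpha$ and $\beta-g(1)=-B$, direct substitution gives $F(1)=(\beta-g(1))(f(1)-1)+\alpha g(1)=\alpha\beta-AB$.  Because $\beta>m_g/a_g>g(1)$, we have $B<0$, and hyperbolic instability of $E_0$ combined with $B<0$ forces $AB<\alpha\beta$ (either $A\geq 0$ makes $AB\leq 0<\alpha\beta$, or $A<0$ forces $AB<\alpha\beta$ directly from $\lambda_2>0$).  Hence $F(1)>0$.  At the other endpoint, either $S_{\dagger}=0$ and $F(0)=-\beta D(\alpha/\beta)<0$, or $D(w(S))\to+\infty$ as $S\downarrow S_{\dagger}$ while the coefficient $\beta-g(S_{\dagger})$ stays positive, so $F(S)\to-\infty$.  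The intermediate value theorem then produces at least one root $S^{*}\in(S_{\dagger},1)$.

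Finally, I would show that $F'(S^{*})>0$ at every root, which simultaneously gives uniqueness and LES.  Differentiating the defining expression yields
$$F'(S)=-g'(S)\bigl[f(S)-D(w(S))\bigr]+(\beta-g(S))\bigl[f'(S)-D'(w(S))w'(S)\bigr]+\alpha g'(S),$$
and substituting the equilibrium identity $f(S^{*})-D(w(S^{*}))=-\alpha g(S^{*})/(\beta-g(S^{*}))$ collapses the first and third terms into $\alpha\beta g'(S^{*})/(\beta-g(S^{*}))>0$.  The remaining middle term is positive because $f'>0$, $D'>0$, and $w'<0$ by the first step.  Hence $F$ crosses zero strictly from below at every root, so two roots in $(S_{\dagger},1)$ would force an intermediate root with $F'\leq 0$, contradicting the previous line; this yields uniqueness.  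Theorem~\ref{stablelas} then gives that the unique positive equilibrium is LES.  The main technical obstacle, and the heart of the argument, is the cancellation in $F'$ at a root: it is a minor miracle that the sign criterion provided by the one-dimensional root equation coincides with the three-dimensional stability criterion of Theorem~\ref{stablelas}, and all three hypotheses (hyperbolicity, instability of $E_0$, and $\beta>m_g/a_g$) are used to make this cancellation work cleanly.
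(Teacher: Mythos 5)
Your proof is correct, and it reaches uniqueness and stability by a genuinely more direct route than the paper's. The shared skeleton is this: the hypothesis $\beta>m_2/a_2=g'(0)$ gives $\beta-g(S)>0$ and $w'(S)<0$ on $[0,1]$ (you derive the latter from concavity of $g$ via $h(S)=(1-S)g'(S)+g(S)-\beta$, the paper from an explicit Monod computation of $g'(S)(1-S)+g(S)\leq m_2/a_2$ --- same content), and instability of $E_0$ fixes the sign of $F$ at the right endpoint (your $F(1)=\alpha\beta-AB>0$ is exactly the paper's condition $\phi(1)>1$). Where you diverge is the uniqueness--stability step. The paper rewrites the equilibrium condition as $w_0=\Phi(S)$ with $\Phi(S)=w(S)\phi(S)/(\phi(S)-1)$, shows $\Phi$ is \emph{globally} strictly decreasing from $+\infty$ to $0$ on $(\phi^{-1}(1),1]$ so the horizontal line at height $w_0$ meets its graph exactly once, and then proves the identity $\mathrm{sign}\,\Phi'(S^*)=-\mathrm{sign}\,F'(S^*)$ to convert $\Phi'<0$ into $F'>0$ and hence LES via Theorem \ref{stablelas}. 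You instead differentiate $F$ directly and observe that at any root the equilibrium identity collapses $-g'(f-D)+\alpha g'$ into $\alpha\beta g'/(\beta-g)>0$, while $(\beta-g)(f'-D'w')>0$ precisely because $w'<0$; so every root is an up-crossing, which yields uniqueness and, through Theorem \ref{stablelas}, LES in one stroke. Your route is shorter and dispenses with the auxiliary function $\Phi$, and your explicit handling of the left endpoint (splitting on whether $w(0)=\alpha/\beta<w_0$) is a detail the paper sidesteps by solving for $w_0$; the paper's $\Phi$ machinery, on the other hand, pays for itself in Theorem \ref{nonuniqueE}, where the same function is reused to produce two positive equilibria when $\beta<g(1)$.
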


\begin{proof} We begin by observing that $\beta>\frac{m_2}{a_2}$ implies that 
$$ \beta>\frac{m_2}{a_2+1} = g(1) \geq g(S), \quad \forall S \in [0,1],$$
hence $E_w$ is unstable, and the function $w(S)=\frac{\alpha (1-S)}{\beta - g(S)} \geq 0$ is defined for all $S \in [0,1].$ Furthermore, since 
$$ \frac{\p}{\p S}\frac{a_2+S^2}{(a_2+S)^2} =\frac{2 a_2 (S-1)}{(a_2+S)^3} <0, \quad S\in (0,1),$$
we have that
$$ \frac{1}{a_2+1} \leq \frac{a_2+S^2}{(a_2+S)^2} \leq \frac{1}{a_2}, \quad \forall S \in [0,1],$$
hence 
$$ \beta > \frac{m_2}{a_2} \geq \frac{m_2(a_2+S^2)}{(a_2+S)^2}=\frac{m_2 a_2(1-S)}{(a_2+S)^2}+ \frac{m_2 S}{a_2+S}=g'(S)(1-S)+g(S), \quad \forall S \in [0,1],$$
which in turns implies that 
$$ w'(S)=\alpha\frac{g'(S)(1-S) - (\beta - g(S))}{(\beta - g(S))^2 }<0$$
for all $S \in [0,1].$ 

We rewrite the equilibrium condition $F(S)=0$ in the form
$$ D(w(S))=f(S)+\frac{\alpha g(S)}{\beta - g(S)}:=\phi(S),$$
and note that the function $\phi(S)$ is increasing on the interval $S \in [0,1]$ since $f'(S),g'(S)>0$ and $\beta - g(S)>0$. Substituting $D(w(S))=1-w(S)/w_0$ and solving for $w_0$, we obtain the equation
$$ w_0=\frac{w(S) \phi(S)}{\phi(S)-1}:=\Phi(S),$$
where $w(S)=\frac{\alpha (1-S)}{\beta - g(S)} > 0$ for all $S \in [0,1).$ The assumption that $E_0$ is unstable combined with $\beta-g(1)>0$ implies that 
$$ (f(1)-1-\alpha)(\beta -g(1))+\alpha \beta>0,$$
which translates into
$$ \phi(1)=f(1)+\frac{\alpha g(1)}{\beta - g(1)}>1.$$  Letting $S_1 =\phi^{-1}(1) \in (0,1),$ we see that $\Phi(S) \geq 0$ for all $S \in (S_1,1]$ with
$$\lim_{S \downarrow S_1} \Phi(S)=+\infty, \quad \Phi(1)=0.$$ Additionally, since $w'(S)<0$, we have that $\Phi'(S)<0$ for all $S \in (S_1,1]$.
Hence for any $w_0>0$, the equation
$w_0=\Phi(S)$ has a unique root $S^*$ such that $\Phi'(S^*)<0.$ We have that
$$ \Phi'(S) = \frac{w'(S) \phi(S)(\phi(S)-1) - w(S) \phi'(S)}{(\phi(S)-1)^2},$$
and also that 
$$ \phi(S)=D(w(S))+ \frac{F(S)}{\beta - g(S)}=\frac{w_0}{w_0-w(S)}+\frac{F(S)}{\beta - g(S)}.$$
Differentiating on both sides and substituting $F(S^*)=0$, we find that $\phi(S^*)=D(w(S^*))=\frac{w_0}{w_0-w(S^*)}$, and thus
$$ \phi'(S^*)=\frac{w_0 w'(S^*)}{(w_0-w(S^*))^2} + \frac{F'(S^*)}{\beta - g(S^*)} = \frac{\phi^2(S^*) w'(S^*)}{w_0}+\frac{F'(S^*)}{\beta - g(S^*)}.$$
At the positive equilibrium, we also have that $w_0=\Phi(S^*)=\frac{w(S^*) \phi(S^*)}{\phi(S^*)-1},$
hence
$$ \phi'(S^*)= \frac{\phi (S^*) (\phi(S^*)-1) w'(S^*)}{w(S^*)}+\frac{F'(S^*)}{\beta - g(S^*)}.$$
Therefore,
$$ \Phi'(S^*) = \frac{w'(S^*) \phi(S^*)(\phi(S^*)-1) - w(S^*) \phi'(S^*)}{(\phi(S^*)-1)^2} = 
- \frac{w(S^*) F'(S^*)}{(\beta - g(S^*))(\phi(S^*)-1)^2}, $$
which implies that
$$ \sign (\Phi'(S^*))=-\sign (F'(S^*)).$$
Since in our case,  $\Phi'(S^*)<0$, it follows that $F'(S^*)>0,$ 
and the corresponding (unique) positive equilibrium is LES. 
\end{proof}

\begin{figure}[htb]
    \centering
    \subfloat{\includegraphics[width=0.6\linewidth]{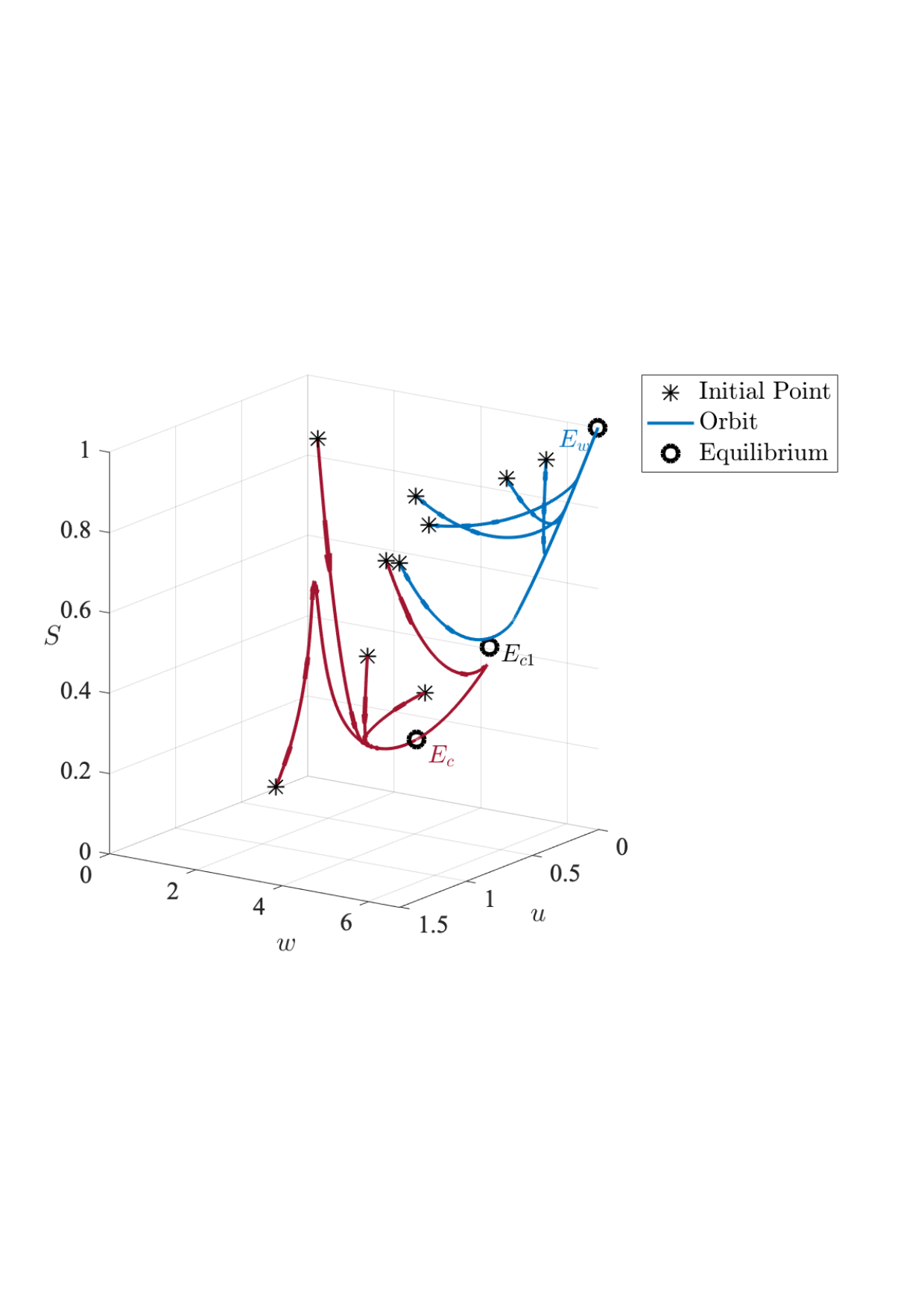}}
    \caption{In this Figure, we illustrate the bistable scenario in the case when the clogged state $E_w$ is locally stable, that is, in
    the absence of the uniform persistence. The parameter values and the growth functions are given by  $\alpha = 3.4, \beta = 0.94, w_{0} = 6.5$, $f(S) = \dfrac{5S}{2+S}$, and $g(S) = \dfrac{1.8S}{0.9+S}$, respectively.  With these parameter values, the system admits two positive equilibria: $E_{c} \sim (0.328,0.672,4.975)$ and 
    $E_{c1} \sim (0.5,0.5,5.721)$.  The equilibria $E_{c}$ and $E_{w}$ are locally stable, and $E_{c1}$ is a saddle point whose stable manifold separates the basins of attraction of $E_{c}$ and $E_{w}$, respectively. The figure shows  5 red orbits and 5 blue orbits. The red orbits correspond to initial conditions that are randomly chosen from the set $\left\{(S,u,w)|0.8 \leq S \leq 1, 0\leq u \leq 1+\frac{\beta w_0}{4}, 0.9 w_0 \leq w < w_0\right\}$, and all
    red orbits are attracted by $E_{c}$.  The blue orbits correspond to initial conditions that are randomly chosen from the set  $\left\{(S,u,w)|0.8 \leq S \leq 1, 0\leq u \leq 1+\frac{\beta w_0}{4}, 0.9 w_0 \leq w < w_0\right\}$, and  all blue orbits are attracted by $E_{w}$. }
    \label{Bistable_without_UP}
\end{figure}


    

\begin{thm}\label{nonuniqueE} If $\beta<g(1)$ and $w_0 \gg 1$ is sufficiently large, then there exist
at least two positive equilibria of (\ref{rescaled.model}).
\end{thm}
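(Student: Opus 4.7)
The plan is to leverage the parametric representation $w_0 = \Phi(S) = \dfrac{w(S)\,\phi(S)}{\phi(S)-1}$ derived in the proof of Theorem~\ref{uniqueE}, but now in the regime where $g(S) - \beta$ changes sign on $(0,1)$, which opens up a qualitatively new topological picture. Since $g$ is continuous and strictly increasing with $g(0) = 0$ and $g(1) > \beta$, there exists a unique $S^\dagger \in (0,1)$ with $g(S^\dagger) = \beta$. The admissible range of $S$ for a positive equilibrium is $(0, S^\dagger)$, where $w(S) = \dfrac{\alpha(1-S)}{\beta - g(S)} > 0$. On this range, $\phi(S) = f(S) + \dfrac{\alpha g(S)}{\beta - g(S)}$ is continuous and strictly increasing, satisfies $\phi(0) = 0$, and blows up as $S \uparrow S^\dagger$. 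By the intermediate value theorem, there is a unique $S_1 \in (0, S^\dagger)$ with $\phi(S_1) = 1$, and on the subinterval $(S_1, S^\dagger)$ we have $\phi(S) > 1$, making $\Phi$ well-defined, positive, and continuous there.

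Next I would examine the boundary behavior of $\Phi$ on $(S_1, S^\dagger)$. As $S \to S_1^+$, $w(S)$ tends to the finite positive value $w(S_1)$ while $\phi(S) - 1 \to 0^+$, forcing $\Phi(S) \to +\infty$. As $S \to (S^\dagger)^-$, both $w(S)$ and $\phi(S)$ diverge like $(\beta - g(S))^{-1}$; specifically $\phi(S)/(\phi(S) - 1) \to 1$, so $\Phi(S)$ behaves asymptotically like $w(S)$ and also tends to $+\infty$. Combined with continuity on $(S_1, S^\dagger)$ and both boundary limits being $+\infty$, the extreme value theorem (applied on compact subintervals) yields a positive global minimum $\Phi^{\ast} > 0$ attained at some interior point of $(S_1, S^\dagger)$.

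Finally, for any $w_0 > \Phi^{\ast}$, the equation $\Phi(S) = w_0$ has at least two distinct solutions $S_1^{\ast} < S_2^{\ast}$ in $(S_1, S^\dagger)$, located on either side of a minimizer of $\Phi$. For each root $S_i^{\ast}$, we obtain a positive equilibrium of (\ref{rescaled.model}) since $u = 1 - S_i^{\ast} > 0$ and $w(S_i^{\ast}) = w_0\bigl(1 - 1/\phi(S_i^{\ast})\bigr) \in (0, w_0)$ because $\phi(S_i^{\ast}) > 1$. This produces at least two distinct positive equilibria as soon as $w_0 > \Phi^{\ast}$, which is exactly the ``$w_0$ sufficiently large'' hypothesis.

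The main technical obstacle is verifying the divergence of $\Phi$ at the right endpoint $S^\dagger$, where both the numerator $w(S)\phi(S)$ and the denominator $\phi(S) - 1$ are unbounded; one must track the leading-order rates to confirm the limit is indeed $+\infty$ rather than a finite value. Once that asymptotic analysis is in hand, the rest of the argument is a standard application of the intermediate and extreme value theorems, and no further hyperbolicity or stability input is needed for mere existence of the two equilibria.
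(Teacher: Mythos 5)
Your proposal is correct and follows essentially the same route as the paper: both reduce the problem to the equation $w_0=\Phi(S)$ on the interval between $S_1=\phi^{-1}(1)$ and $S_2=g^{-1}(\beta)$, show $\Phi\to+\infty$ at both endpoints, and conclude that any $w_0$ above the interior minimum of $\Phi$ yields two roots. Your extra care with the asymptotics at $S^\dagger$ (noting $\phi/(\phi-1)\to 1$ so $\Phi\sim w$) is a detail the paper states without justification, but it does not change the argument.
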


\begin{proof} We use a similar approach as in the previous Theorem. Letting $S_2:=g^{-1}(\beta) \in (0,1)$, we see that the function
$\phi(S)=f(S)+\frac{\alpha g(S)}{\beta - g(S)}$ is increasing on the interval $S \in [0,S_2)$ with $\phi(0)=0$ and 
$$\lim_{S \uparrow S_2} \phi(S)=\lim_{S \uparrow S_2} w(S)=+\infty.$$ 
We let $S_1 =\phi^{-1}(1) \in (0,S_2).$ Using the same function $\Phi(S)$ as in the proof of Theorem \ref{uniqueE}, we find that $\Phi(S)>0$
for all $S \in (S_1,S_2)$ and
$$ \lim_{S \downarrow S_1} \Phi(S)=\lim_{S \uparrow S_2} \Phi(S)=+\infty.$$
Therefore, $\Phi(S)$ attains its minimal value 
$\Phi(\hat S)$ at some $\hat S \in (S_1,S_2).$ Consequently, for any $w_0 > \Phi(\hat S),$
the equation $w_0=\Phi(S)$ admits two distinct roots $S_1^*$ and $S_2^*$ such that
$$ S_1<S_1^* < \hat S < S_2^*<S_2.$$
Therefore, (\ref{rescaled.model}) admits
at least two positive equilibria. \end{proof}

\begin{rem} The assumption that $g(1)<\beta$ in Theorem \ref{nonuniqueE} corresponds to $E_w$ being locally stable. Having two additional positive equilibria corresponding to the values $S_1^*<S_2^*$ would result in a bistable scenario, where there will be initial conditions for which the chemostat becomes clogged in finite time and other initial conditions for which the chemostat operates at a stable positive equilibrium. Specifically, the equilibrium with $S=S_1^*$ will have $\Phi'(S_1^*)<0,$ and hence be locally stable, while the equilibrium with $S=S_2^*$ will have $\Phi'(S_2^*)>0,$ and hence be unstable. 
\end{rem}

\begin{figure}[htb]
    \centering
    \subfloat{\includegraphics[width=0.6\linewidth]{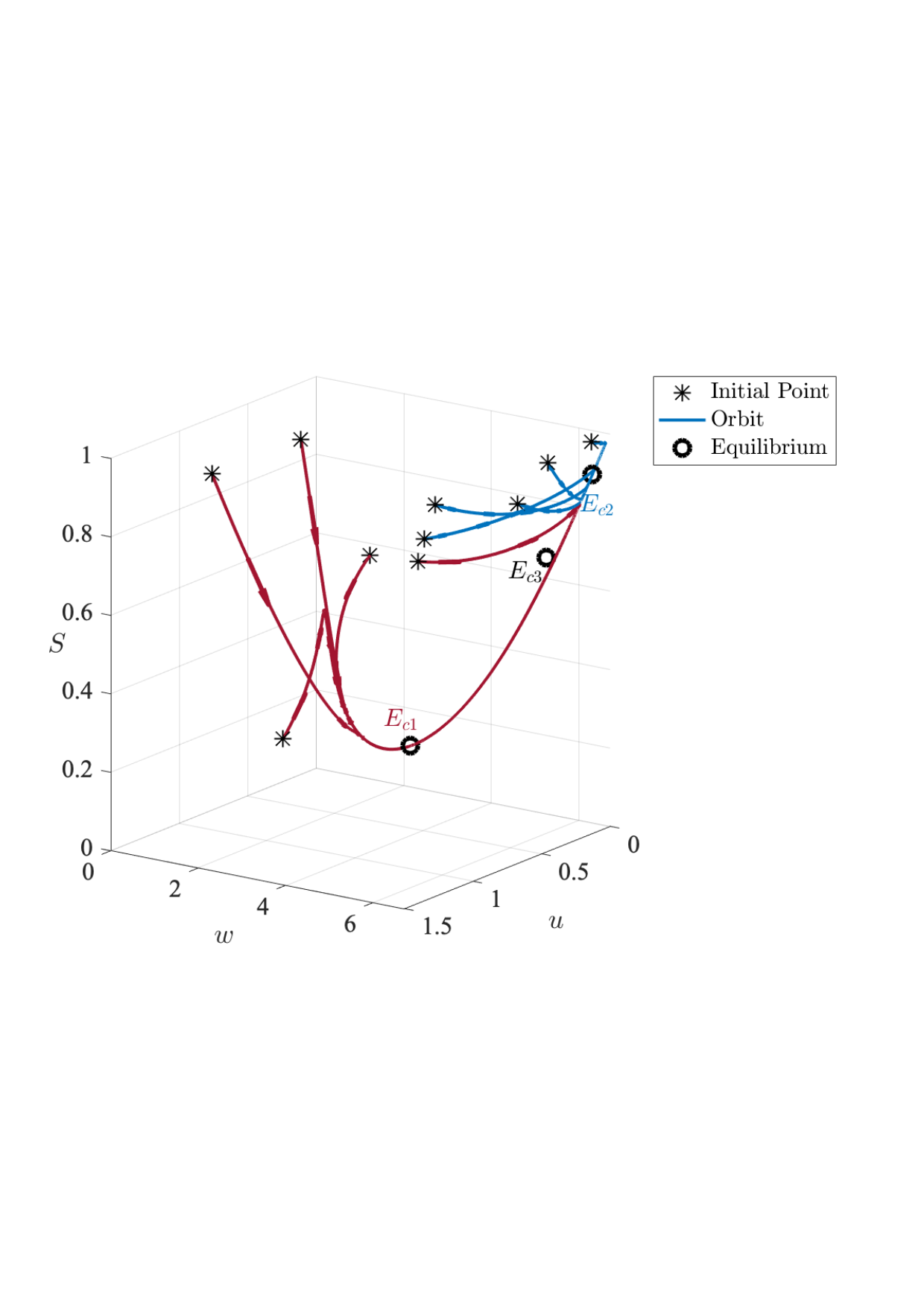}}
    \caption{In this Figure, we illustrate the bistable scenario in the uniformly persistent case. The parameter values and the growth functions are given by  $\alpha = 3.3, \beta = 0.95, w_{0} = 6.5$, $f(S) = \dfrac{5S}{2+S}$, and $g(S) = \dfrac{1.8S}{0.9+S}$, respectively.  With these parameter values, the system admits three distinct positive equilibria: $E_{c1} \sim (0.282,0.718,4.544)$,
    $E_{c2} \sim (0.948,0.052,6.442)$, and 
    $E_{c3} \sim (0.712,0.288,6.153)$.  The equilibria $E_{c1}$ and $E_{c2}$ are locally stable, and $E_{c3}$ is a saddle point whose stable manifold separates the basins of attraction of $E_{c1}$ and $E_{c2}$, respectively. The figure shows  5 red orbits and 5 blue orbits. The red orbits correspond to initial conditions that are randomly chosen from the set $\left\{(S,u,w)|0\leq S < 0.9, 0\leq u \leq 1+\frac{\beta w_0}{4}, 0\leq w < 0.95 w_0\right\}$, and all
    red orbits are attracted by $E_{c1}$.  The blue orbits correspond to initial conditions that are randomly chosen from the set  $\left\{(S,u,w)|0.9 \leq S < 1, 0\leq u \leq 1+\frac{\beta w_0}{4}, 0.96 w_0 \leq w < w_0\right\}$, and  all blue orbits are attracted by $E_{c2}$.  }
    \label{Bistable_with_UP}
\end{figure}

\section{Discussion}\label{sec7}
In this paper, we have presented a novel mathematical model of a chemostat where the effective dilution rate varies due to the microbial biofilm growth.  The main novelty of this new model is that some solutions can reach the clogged state in finite time corresponding to the eventual failure of the bioreactor. Importantly, we have shown that the chemostat may become clogged regardless of the total volume of its reactor chamber $w_0$. The parameters that determine the stability of the clogged state are exclusively related to the difference $g(1)-\beta$ between the specific growth rate at maximal resource concentration $g(1)$ and the sloughing off rate $\beta$ of the biofilm.

We have also shown that the qualitative behavior of the new model is much richer than that of a traditional chemostat with wall growth. In particular, we provided sufficient conditions for two distinct multistable regimes of the chemostat. In the first regime, both the washout and the clogged states are unstable, hence the model exhibits uniform persistence of the microbial species, but  there exist two distinct positive stable equilibria, each with its own nonempty basin of attraction.
In the second regime, the washout state is unstable, but while the clogged state is stable, the chemostat also admits  a stable positive equilibrium, so that for some initial conditions, the chemostat will reach the clogged state in finite time,  and for others, the chemostat will operate stably at a positive equilibrium. We have also shown that the number of positive equilibria is sensitive to the total volume of the reactor chamber $w_0$, which is a property that may have implications for chemostat design.

We have performed a complete stability analysis of both boundary and positive equilibria and showed that the positive equilibria do not undergo Hopf bifurcations,  thus making sustained oscillations unlikely, although a rigorous proof of this claim remains elusive. 

The chemostat model that we described in this paper is a first step towards understanding the combination of factors that lead to decreased performance of a chemostat up to a complete failure of the device due to bio-clogging. The model suffers from many simplifications such as the assumption of well-mixedness and the general lack of an explicit spatial description of the interface between the biofilm and the fluid chamber. These are the questions to be addressed in subsequent modifications of the model.



\pagestyle{empty}
\bibliographystyle{plain}
\bibliography{Manuscript.bib}

\end{document}